\crefname{equation}{}{}
\newcommand\remove[1]{}
\newtheorem{lemma}{Lemma}[section]
\newtheorem*{lemma*}{Lemma}
\newtheorem{theorem}[lemma]{Theorem}
\newtheorem{corollary}[lemma]{Corollary}
\newtheorem*{corollary*}{Corollary}
\newtheorem*{theorem*}{Theorem}
\newtheorem*{inducthyp*}{Inductive Hypothesis}
\newtheorem*{definition*}{Definition}
\newtheorem*{rem*}{Remark}
\newcommand\R{\mathbb{R}}
\newcommand{\eps}{\varepsilon}
\renewcommand{\O}{\widetilde{O}}
\newcommand{\assign}{\leftarrow}
\renewcommand{\forall}{\mathrm{\text{ for all }}}
\newcommand{\vone}{\mathbf{1}}
\newcommand{\bc}{\bm{c}}
\newcommand{\bd}{\boldsymbol{d}}
\renewcommand{\bf}{\bm{f}}
\newcommand{\br}{\boldsymbol{r}}
\newcommand{\bu}{\boldsymbol{u}}
\newcommand{\bDelta}{\boldsymbol{\Delta}}
\newcommand{\diag}{\mathrm{diag}}
\renewcommand{\hat}{\widehat}
\DeclareFontFamily{U}{mathb}{\hyphenchar\font45}
\DeclareFontShape{U}{mathb}{m}{n}{<5> <6> <7> <8> <9> <10> gen * mathb
<10.95> mathb10 <12> <14.4> <17.28> <20.74> <24.88> mathb12}{}
\DeclareSymbolFont{mathb}{U}{mathb}{m}{n}
\DeclareMathSymbol{\rcirclearrow}{\mathbin}{mathb}{'367}
\renewcommand{\bar}{\overline}
\xdef\csname m\x\endcsname{\noexpand\mathbf{\x}}
\xdef\csname c\x\endcsname{\noexpand\mathcal{\x}}
\newif\ifrandom
\newcommand{\defeq}{\stackrel{\mathrm{\scriptscriptstyle def}}{=}}
\def\norm#1{\left\| #1 \right\|}
\renewcommand{\bar}[1]{\overline{#1}}
\newcommand{\va}{\boldsymbol{a}}
\newcommand{\vb}{\boldsymbol{b}}
\newcommand{\vc}{\boldsymbol{c}}
\newcommand{\vr}{\boldsymbol{r}}
\newcommand{\new}{\mathrm{new}}
\renewcommand{\hat}{\widehat}
\newcommand{\poly}{{\mathrm{poly}}}
\newcommand{\todolater}[1]{}
\begin{document}

\title{Exponential Convergence of Sinkhorn Under Regularization Scheduling}
\author{
Jingbang Chen\\ University of Waterloo\\ j293chen@uwaterloo.ca
\and
Li Chen\\ Georgia Tech\\ lichen@gatech.edu
\and
Yang P. Liu \\ Stanford University \\ yangpliu@stanford.edu
\and
Richard Peng\\ University of Waterloo\\ y5peng@uwaterloo.ca
\and
Arvind Ramaswami \\ Georgia Institute of Technology \\ aramaswami32@gatech.edu
}

\maketitle

\begin{abstract}
In 2013, Cuturi \cite{Cut13} introduced the \textsc{Sinkhorn} algorithm for matrix scaling as a method to compute solutions to regularized \emph{optimal transport} problems. In this paper, aiming at a better convergence rate for a high accuracy solution, we work on understanding the \textsc{Sinkhorn} algorithm under regularization scheduling, and thus modify it with a mechanism that adaptively doubles the regularization parameter $\eta$ periodically. We prove that such modified version of \textsc{Sinkhorn} has an exponential convergence rate as iteration complexity depending on $\log(1/\eps)$ instead of $\eps^{-O(1)}$ from previous analyses \cite{Cut13,altschuler2017near} in the \emph{optimal transport} problems with integral supply and demand. Furthermore, with cost and capacity scaling procedures, the general \emph{optimal transport} problem can be solved with a logarithmic dependence on $1/\eps$ as well.

\end{abstract}

\section{Introduction}


The \emph{optimal transport} (OT) problem asks to compute the minimum cost needed to send supplies to demands. It is formally described as the following linear program:
\begin{align}
\label{eq:OT}
OPT &\defeq \min_{\mX \in \mU(\vr,\vc)}\sum_{i\in[n],j\in[m]} \mQ_{ij}\mX_{ij},~\mU(\vr, \vc) \defeq \left\{\mX \in \mathbb{R}_{+}^{n \times m}: \mX \mathbf{1}_{m}=\vr \quad \text { and } \quad \mX^\top \mathbf{1}_n =\vc\right\}
\end{align}
where $\mQ$ is the given cost matrix, and $\br \in \R^n_+$ and $\bc \in \R^m_+$ are the demand and supply vectors.
In this paper, we want to understand the time complexity of the algorithm for finding a feasible solution $\mX$ whose cost is within $OPT + \eps.$ The \emph{optimal transport} problem is widely used in machine learning, particularly in areas such as computer vision~\cite{damodaran2018deepjdot,Kolkin_2019_CVPR}, natural language processing~\cite{kusner2015word}, deep learning~\cite{oh2020unpaired,zhang2021optimal}, clustering~\cite{ho2017multilevel}, unsupervised learning~\cite{arjovsky2017wasserstein}, and semi-supervised learning~\cite{solomon2014wasserstein}.



In 1964, Richard Sinkhorn discovered that for any positive square matrix $\mA$, there exists a unique doubly stochastic matrix of the form $\mX=\diag(\va) \mA \diag(\vb)$ where $\diag(\va)$ and $\diag(\vb)$ are diagonal matrices with positive entries \cite{sinkhorn1964relationship}.
$\mX$ can be computed using the \textsc{Sinkhorn} algorithm. This algorithm normalizes the rows and columns of the matrix in an alternating fashion~\cite{sinkhorn1967concerning}.
In 2013, Cuturi showed that the matrix scaling method can be used to approximate solutions to the \emph{optimal transport} problem with regularization~\cite{Cut13}. Such regularization is achieved by adding an entropy regularizer $\eta^{-1} \sum_{i \in [n], j \in [m]} \mX_{ij}(\log \mX_{ij} - 1)$ to the OT objective function. The idea of solving regularized OT was already introduced in 1980s under the name of gravity models \cite{peyre2019computational}.


The convergence rate of the \textsc{Sinkhorn} algorithm has been the subject of both theoretical and practical analyses in various settings.
For instance, it has been proven to have a $\log(1/\eps)$ convergence bound under the Hilbert projective metric~\cite{franklin1989scaling}.
Since the work of \cite{Cut13}, several OT algorithms have been developed using the idea of entropic regularization, which have been efficient in practice~\cite{benamou2015iterative,genevay2016stochastic}.
However, there are only a few theoretical guarantees for the optimal transport problem directly.
\cite{altschuler2017near} shows that with the appropriate choice of parameters, the standard \textsc{Sinkhorn} or \textsc{Greenkhorn} algorithm is a near-linear time approximation algorithm for input data of $n$ dimensions, taking $O(n^2 ||\mQ||_{\infty}^3 (\log n)\eps^{-3})$ runtime to give a solution within $OPT + \eps$.
However, the convergence rate may be significantly slower when seeking high-accuracy solutions due to the $\eps^{-3}$ factor.

To improve the convergence rate in high-accuracy scenarios, we focus on the selection of the regularization parameter $\eta$, which balances the desired accuracy and the iteration complexity of the subroutine.
One approach uses a series of $\{\eta_k\}_{k \geq 1}$ instead of a single value.
In 2019, Bernhard Schmitzer discussed such scheduling in \cite{schmitzer2019stabilized}, providing a new analysis of the \textsc{Sinkhorn} algorithm with regularization scheduling.
In our work, we examine the \textsc{Sinkhorn} algorithm under this scheduling and explore incorporating it into an adaptive regularization scheme.

\subsection{Our Results} 

In this paper, we show that the \textsc{Sinkhorn} algorithm with regularization scheduling has an exponential convergence rate.
This means that the number of iterations needed to achieve an $\eps$-additive error desired is $\poly\log(1/\eps).$
Additionally, the algorithm has a runtime of $\poly(n, m, \log(1/\eps))$ using row/column scaling operations.
The closest similar result to this is the weakly polynomial time matrix scaling algorithm in \cite{LSW98}, which uses a more complicated scaling procedure. We provide a table comparing our result with some previous works in Table \ref{tab:sinkhorn}.

\renewcommand{\arraystretch}{2}
\begin{table}[h!]
    \centering
    \begin{tabular}{|c|c|c|} 
      \hline
      Algorithm & \# of Iterations & Comments \\
      \hline
      \Cref{theo:algo} & $\O\left(\|\br\|_1^2 \log(\|\mQ\|_{\infty} / \eps)\right)$ & Integral OT \\
      \hline
      \Cref{theo:costCapScaling} & $\poly(n, m, \log(1 / \eps), \log ||\mQ||_{\infty}, \log ||\vr||_1)$ & General OT \\
      \hline
      \cite{altschuler2017near} & $\O(\|\mQ\|_\infty^3 / \eps^3)$ & Plain \textsc{Sinkhorn} with $\eta = \log n / \eps$ \\
      \hline
      \cite{franklin1989scaling} & $O(\exp(\|\mQ\|_\infty \log n / \eps) \log (1 / \eps))$ & Plain \textsc{Sinkhorn} with $\eta = \log n / \eps$ \\
      \hline
      \cite{LSW98} & $\O(n^5 \log(1 / \eps))$ & Modified row/column scaling \\
      \hline
    \end{tabular}
    \caption{\textsc{Sinkhorn}-based algorithms}
    \label{tab:sinkhorn}
\end{table}

For the analysis, we first focus on cases where the demands and supplies are integers bounded by some integer $\mu$. The convergence result is summarized as follows:
\begin{theorem}[Algorithmic result]
\label{theo:algo}
If both the demand vector $\br$ and the supply vector $\bc$ are integral and bounded by $\mu$, i.e. $\mu = \max\{\|\br\|_\infty, \|\bc\|_\infty\}$, 
\Cref{algo:scaling} computes a feasible solution $\mX$ to \eqref{eq:OT} with $\eps$-additive error using
\begin{align*}
O\left(\norm{\br}_1^2 \log\left(n \mu\right) \log\left(\norm{\mQ}_{\infty} \norm{\br}_1 / \eps\right)\right)
\end{align*}
iterations of row/column scaling operations.
\end{theorem}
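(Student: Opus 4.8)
The plan is to analyze a single ``doubling phase'' of \Cref{algo:scaling} and then sum over the phases. Write the level-$\eta$ entropic regularization of \eqref{eq:OT} as $\min_{\mX\in\mU(\br,\bc)}\langle\mQ,\mX\rangle-\eta^{-1}\Ent(\mX)$ with $\Ent(\mX)=\sum_{ij}\mX_{ij}(1-\log\mX_{ij})$, unique minimizer $\mX^*_\eta$, and convex dual $D_\eta(\bf,\bg)$ in the log-potentials; recall that a \textsc{Sinkhorn} sweep is exact block coordinate ascent on $D_\eta$, so $D_\eta$ increases monotonically and the current iterate has exactly the correct row (resp.\ column) marginal immediately after a row (resp.\ column) update.

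\emph{Step 1: regularization error and the number of phases.} Every $\mX\in\mU(\br,\bc)$ has total mass $\|\br\|_1$, so (after deleting zero rows/columns, using $m,n\le\|\br\|_1\le n\mu$) the entropy $\Ent(\mX)$ varies over an interval of width $H\defeq O(\|\br\|_1\log(n\mu))$; hence $\langle\mQ,\mX^*_\eta\rangle\le OPT+\eta^{-1}H$. It therefore suffices to drive $\eta$ up to $\eta_{\max}=\Theta(H\|\mQ\|_\infty/\eps)$, where the extra $\|\mQ\|_\infty$ factor pays for the rounding in Step~3. Starting from $\eta_0=\Theta(1/\|\mQ\|_\infty)$ --- at which scale the matrix with entries $e^{-\eta_0\mQ_{ij}}$ is well balanced and one sweep already brings $D_{\eta_0}$ within $O(H)$ of optimal --- the number of doublings is $O(\log(\eta_{\max}/\eta_0))=O(\log(\|\mQ\|_\infty\|\br\|_1/\eps))$, and this will be the outer factor.

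\emph{Step 2: the per-phase iteration bound.} I claim a phase at level $2\eta$, started from the previous phase's near-optimal log-potentials scaled by $2$, reaches the accuracy needed to seed the next phase in $O(\|\br\|_1^2\log(n\mu))$ sweeps. This rests on two lemmas. (i) \textbf{Warm-start lemma:} the dual suboptimality of $(2\bf^*_\eta,2\bg^*_\eta)$ at level $2\eta$ is $O(H)$, \emph{uniformly in $\eta$} --- morally the optimal potentials scale almost linearly in $\eta$ up to an $O(\log(n\mu))$ additive correction per coordinate, so doubling both $\eta$ and the potentials lands a bounded distance from the new optimum no matter how large $\eta$ already is. I would prove it by a primal--dual comparison: the level-$2\eta$ primal optimum and $D_{2\eta}(2\bf^*_\eta,2\bg^*_\eta)$ are both pinned near $OPT$ up to $O(\eta^{-1}H)$ entropy corrections, with a primal test point derived from $\mX^*_\eta$ certifying the remaining side, all controlled by the width-$H$ bound on $\Ent$ over $\mU(\br,\bc)$. (ii) \textbf{Progress lemma} (in the spirit of \cite{altschuler2017near}): one \textsc{Sinkhorn} sweep raises $D_\eta$ by at least $\Omega(\|\br\|_1^{-1})\,\|(\text{current marginal})-(\text{target})\|_1^2$, while the dual gap is at most the current $\ell_1$ marginal violation times the $\ell_\infty$-range of the optimal potentials; chaining these from the initial gap $O(H)=O(\|\br\|_1\log(n\mu))$ down to a constant gap gives $O(\|\br\|_1\cdot H)=O(\|\br\|_1^2\log(n\mu))$ sweeps. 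Crucially the within-phase target is a \emph{constant} gap, not an $\eps$-dependent one: the $\eps$ accuracy is supplied entirely by the choice of $\eta_{\max}$ and the rounding, so no phase over-converges.

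\emph{Step 3: rounding and assembly.} Apply an Altschuler-type rounding lemma to the final iterate $\tilde\mX$ to get $\mX\in\mU(\br,\bc)$ with $\langle\mQ,\mX\rangle\le\langle\mQ,\tilde\mX\rangle+O(\|\mQ\|_\infty\,\|\tilde\mX\vone-\br\|_1)$; since $\eta_{\max}$ carries the $\|\mQ\|_\infty$ slack, the three error contributions (regularization $\le\eps/3$, within-phase \textsc{Sinkhorn} error $\le\eps/3$, rounding $\le\eps/3$) sum to $\eps$. Multiplying the per-phase bound of Step~2 by the phase count of Step~1 yields the claimed $O(\|\br\|_1^2\log(n\mu)\log(\|\mQ\|_\infty\|\br\|_1/\eps))$ iterations. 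I expect the \textbf{warm-start lemma (i)} to be the crux: the $\eta$-uniform bound on the suboptimality a doubling creates is exactly what converts the $\poly(1/\eps)$ behavior of plain \textsc{Sinkhorn} into $\polylog(1/\eps)$, and it is where integrality of $\br,\bc$ enters --- both to get the clean $O(\|\br\|_1\log(n\mu))$ entropy width and to keep the transportation polytope non-degenerate.
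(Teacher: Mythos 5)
Your proposal has the right skeleton (doubling phases $\times$ per‑phase bound $= O(\|\br\|_1^2\log(n\mu))\times O(\log(\|\mQ\|_\infty\|\br\|_1/\eps))$), but it departs from the paper in two places, and the second is a genuine gap. On the bookkeeping: you track the entropic dual $D_\eta$ and therefore need a warm‑start lemma (your (i)) whose proof you only sketch. The paper instead tracks the \emph{unregularized} LP dual $D=\sum_i\vr_i\alpha_i+\sum_j\vc_j\beta_j$, shows the potentials stay LP‑feasible throughout (Lemma~\ref{lemma:valid}, Corollary~\ref{cor:valid}) so $D\le OPT$ always, and shows $D$ is monotone non‑decreasing across all scaling steps \emph{and all doublings} (Lemma~\ref{lemma:dualincrease}); the warm start is then automatic because $D$ never resets when $\eta$ doubles, and the only thing to control at a phase boundary is $OPT-D\le 2\eta^{-1}\|\vr\|_1\log(n\mu)$ (Lemma~\ref{lemma:dualitygap}). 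This sidesteps the $\eta$‑uniform warm‑start bound entirely.

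The genuine gap is Step 3. Altschuler‑type rounding returns a feasible $\mX$ with $\langle\mQ,\mX\rangle\le\langle\mQ,\tilde{\mX}\rangle+O(\|\mQ\|_\infty\,\|\tilde{\mX}\vone-\br\|_1)$. But the within‑phase stopping threshold is $\ell_1$ marginal violation $\le 1/(2\mu)$ in the $\mu$‑scaled units, i.e.\ $\approx 1/2$ once rescaled to the original demands, so that rounding bound is $\Theta(\|\mQ\|_\infty)$, not $\eps$. Raising $\eta_{\max}$ by an extra $\|\mQ\|_\infty$ factor does not help, because $\eta_{\max}$ controls the regularization bias and not the marginal violation at termination; to get $\eps/3$ from Altschuler rounding you would need a stopping threshold of $O(\eps/\|\mQ\|_\infty)$, and since per‑step dual progress scales like the squared marginal violation (your (ii), Lemma~\ref{lemma:dualincrease}), that reintroduces $\poly(1/\eps)$ iterations in the final phase --- exactly what the theorem is designed to avoid. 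The paper escapes this through Lemma~\ref{lemma:hatmx}: once the marginal violation is below $1/(2\mu)$, Hall's marriage theorem --- using integrality of $\mu\vr,\mu\vc$ to promote a strict ``$<1$'' slack to ``$\le 0$'' --- shows that $\mX$ dominates a half‑mass \emph{exactly feasible} plan $\hat{\mX}$, and outputting $2\hat{\mX}$ costs only an extra $2\eta^{-1}\|\vr\|_1\log(n\mu)$, with no $\|\mQ\|_\infty$ price on the residual marginal slack. The same $\hat{\mX}$ also certifies the duality‑gap bound in Lemma~\ref{lemma:dualitygap}. So integrality is used much more aggressively than you indicate: not just for the entropy width $H$, but for the combinatorial construction in Lemma~\ref{lemma:hatmx}, and that construction is the missing ingredient your proposal would need to reach $\polylog(1/\eps)$.
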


Additionally, note that if $\vr$ is integer and $||\vr||_1 = O(n)$ (which is relevant in problems like weighted bipartite matching), then Theorem 1.1 gives a stronger guarantee than \cite{LSW98}.

We will provide a detailed explanation and proof of our statement in \Cref{sec2}.
Essentially, the proof is based on analyzing the duality gap of the regularized optimal transport problem.
Given a good primal-dual solution pair, we show that after doubling the regularization parameter, the duality gap is proportional to $1 / \eta.$
On the other hand, we also show that a row/column scaling operation reduces the duality gap by roughly $1 / \eta$.
Both $1 / \eta$ terms cancel each other and we can efficiently find a good primal-dual pair w.r.t. to the doubled $\eta$.

To achieve $\poly(n, m, \log(1/\eps))$ runtime and to handle non-integral input, we use a cost/capacity scaling scheme commonly used in network flow algorithms (see Appendix C in \cite{CKLPPGS22}).
The method involves reducing \eqref{eq:OT} to $O(\log(\norm{\mQ}_\infty) \log(\mu))$ instances each with a dimension of at most $2n^2$ and demand/supply entries at most $n^8$.


To handle fractional input, we can round each cost, demand, and supply entry to the nearest integral multiple of $\poly(\eps, 1/n, 1/m)$, that is, an integral instance with $\mu = \max\{\|\br\|_\infty, \|\bc\|_\infty\} \cdot \poly(n, m, 1/\eps)$.
This allows us to solve the problem in $\poly(n, m, \log(1/\eps))$ time.
However, this solution may not be feasible for the original fractional input.
But, we can use standard rounding methods to make the solution feasible such as Algorithm 2 in \cite{altschuler2017near}. This process is summarized in the following Lemma.


\begin{theorem}[Polynomial Runtime via Cost/Capacity Scaling and Rounding]
\label{theo:costCapScaling}
There is an algorithm that gives a solution $\mX$ to \eqref{eq:OT} with $\eps$ additive error with $O(\log(\norm{\mQ}_\infty) \log(\mu))$ calls to Algorithm~\ref{algo:scaling} on integral OT instances with dimension at most $n^2$ and the total demand/supply at most $O(n^{10}).$
\end{theorem}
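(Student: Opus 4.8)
The plan is to combine the integral solver of \Cref{theo:algo} with two standard reductions. First, a \emph{rounding} reduction turns an arbitrary (possibly fractional) OT instance into an integral one while changing the optimum by at most $\eps/3$. Second, a \emph{cost/capacity scaling} reduction, in the spirit of Appendix~C of \cite{CKLPPGS22}, replaces one integral instance whose cost and marginal entries may be large by $O(\log\|\mQ\|_\infty\log\mu)$ calls to \Cref{algo:scaling} on integral instances in which \emph{all} costs and marginal entries are bounded by $\poly(n,m)$ --- taken below to be at most $n^8$ per entry, hence total demand/supply $O(n^{10})$ over the at most $2n^2$ bipartite edges, and dimension $O(n^2)$. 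We then run \Cref{algo:scaling} on each of these polynomially-sized subinstances to accuracy $\eps' \defeq \eps/\poly(n,m,\log\|\mQ\|_\infty,\log\mu)$, undo the scaling to get a near-feasible plan for the rounded instance, and finish with a single call to the rounding routine of \cite{altschuler2017near} (their Algorithm~2) to project the result into $\mU(\vr,\vc)$.

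\textbf{Step 1 (fractional to integral).} Set $\delta \defeq \Theta\big(\eps/((n+m)\|\mQ\|_\infty\|\vr\|_1)\big)$. Round each entry of $\mQ,\vr,\vc$ to the nearest integer multiple of $\delta$, adjust $O((n+m)\delta)$ units of mass in $\vr,\vc$ so that $\vone^\top\vr=\vone^\top\vc$ is restored, and divide through by $\delta$; this yields an integral instance whose $\mu$ and $\|\mQ\|_\infty$ are larger than the originals by a factor $\poly(n,m,1/\eps)$. Since every feasible plan has total mass $\|\vr\|_1$, perturbing $\mQ$ by $\delta$ changes the cost of any fixed plan, and hence $OPT$, by at most $\delta\|\vr\|_1\le\eps/3$; perturbing $\vr,\vc$ by $O((n+m)\delta)$ in $\ell_1$ moves $OPT$ by at most $O((n+m)\delta\|\mQ\|_\infty)\le\eps/3$ (reroute the surplus/deficit mass along any edge, at cost at most $\|\mQ\|_\infty$ per unit). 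The point is that this blow-up in $\mu$ and $\|\mQ\|_\infty$ enters the final complexity only through logarithms.

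\textbf{Step 2 (cost/capacity scaling).} Write the integral marginals and costs in binary and process their bits from most to least significant, carrying along a current plan and near-optimal dual potentials. Scaling the current subinstance by $2$ and appending one bit of the marginals leaves a \emph{residual} transportation problem with $0/1$ marginals; scaling by $2$ and appending one bit of the cost perturbs every reduced cost by $O(1)$. A standard reduced-cost / complementary-slackness argument for integral data then shows that the transportation instances \Cref{algo:scaling} is actually asked to solve all have costs and marginal entries bounded by $\poly(n,m)$ and dimension $O(n^2)$; nesting the $O(\log\mu)$ capacity-scaling stages within the $O(\log\|\mQ\|_\infty)$ cost-scaling stages (or vice versa) produces the stated $O(\log\|\mQ\|_\infty\log\mu)$ calls, where $\mu,\|\mQ\|_\infty$ are read as the post-rounding integral parameters. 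By \Cref{theo:algo}, with $\|\vr\|_1,\mu,\|\mQ\|_\infty$ all $\poly(n,m)$ and target accuracy $\eps'$, each call costs $\poly(n,m,\log(1/\eps))$ row/column scaling operations, and since there are only $O(\log\|\mQ\|_\infty\log\mu)$ stages, each contributing at most $\eps'$ additional error, the scaling is robust to using approximate (then rounded) subsolutions.

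\textbf{Step 3 (reassembly, final rounding, and the main obstacle).} Unwinding the recursion --- each stage's plan equals twice the previous plan plus its residual correction --- yields a plan feasible for the rounded integral instance with cost at most $OPT_{\mathrm{rounded}}+\eps/3$; scaling back by $\delta$ gives a nonnegative $\mX''$ whose marginals deviate from $\vr,\vc$ by $O((n+m)\delta)$ in $\ell_1$ and whose cost is at most $OPT+2\eps/3$. Feeding $\mX''$ to Algorithm~2 of \cite{altschuler2017near} returns $\Xhat\in\mU(\vr,\vc)$ with $\l\mQ,\Xhat\r \le \l\mQ,\mX''\r + O\big(\|\mQ\|_\infty(\|\mX''\vone-\vr\|_1+\|(\mX'')^\top\vone-\vc\|_1)\big) \le OPT+\eps$, as claimed (shrinking the constant in $\delta$ if needed). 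The main obstacle is Step 2: establishing that the residual instances produced by the scaling \emph{simultaneously} have $\poly(n,m)$-bounded costs, $\poly(n,m)$-bounded marginals, and dimension $O(n^2)$, and that this persists when each subproblem is only solved to additive error $\eps'$ rather than exactly --- so that an approximate subsolution can be rounded to an integral plan and passed to the next stage without the parameters drifting. This is where the reduced-cost and complementary-slackness machinery of min-cost-flow scaling must be carefully adapted to the transportation/bipartite setting.
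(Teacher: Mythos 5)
Your high-level plan matches the paper's (reduce to polynomially bounded integral instances via scaling, solve each with Algorithm~\ref{algo:scaling}, then round), but your Step~2 is a sketch rather than an argument, and it contains a real gap that you yourself flag. The paper closes this gap in a specific way that your outline does not: it first reduces the OT instance to a minimum-cost circulation (MCC) instance on a bipartite graph with $2n$ vertices and $n^2$ edges (Lemma~\ref{lemma:OTtoMCC}), performs cost scaling and capacity scaling \emph{at the MCC level} by citing Lemmas~C.3 and C.10 of \cite{CKLPPGS22} (Lemmas~\ref{lemma:costScaling} and \ref{lemma:capScaling}), producing $O(\log\|\mQ\|_\infty\log\mu)$ MCC subinstances with cost $O(n)$ and capacity $\poly(n)$, and then reduces each such MCC subinstance \emph{back} to an OT instance via a vertex-edge bipartite gadget (Lemma~\ref{lemma:MCCtoOT}) with dimension $\max\{n,m\}=n^2$ and total demand $mU=O(n^{10})$. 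The detour through MCC is not cosmetic: after a scaling stage has routed partial flow, the residual problem has backward (reverse) arcs and is a circulation problem, not a transportation problem, so it cannot be handed directly to \Cref{algo:scaling}. Your Step~2 states ``leaves a residual transportation problem with $0/1$ marginals,'' which is precisely the claim that fails without the re-encoding gadget; your own closing paragraph admits this is ``the main obstacle'' and leaves it open. You should either adopt the paper's MCC round-trip, or replace your Step~2 with an explicit construction that converts each residual circulation subproblem into an integral OT instance of the stated size.

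Two smaller mismatches worth noting. First, the theorem counts calls to \Cref{algo:scaling}, and the paper obtains exactly $O(\log\|\mQ\|_\infty\log\mu)$ calls as a clean composition of the four lemmas; your Step~1 rounding inflates $\mu$ and $\|\mQ\|_\infty$ by $\poly(n,m,1/\eps)$ before the scaling begins, so with your parameterization the number of calls acquires extra $\log(1/\eps)$ and $\log n$ factors, which is weaker than the stated bound (the paper's theorem, read in context with Section~\ref{sec:polyScaling}, is for integral instances, and Step~1 is treated separately in the introduction). Second, you set each subinstance's target accuracy to $\eps'=\eps/\poly(\cdots)$ and argue errors accumulate; the paper instead solves each polynomially bounded \emph{integral} MCC subinstance to accuracy $1/\poly(n)$ and then rounds to an exact integral optimum via cycle cancellation \cite{KP15}, so no error propagates between stages at all. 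That exactness is what makes the nested scaling recursion go through without drift, which is the second half of the obstacle you identify.
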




\subsection{Related Work} 
\paragraph{Optimal Transport} 
Many combinatorial techniques have been introduced to compute the exact solution for certain kinds of OT problems. The Hungarian method invented by Kuhn \cite{kuhn1955hungarian} in $1955$ solves the assignment problem (equivalent to OT) in $O(n^3)$ time. In $1991$, Gabow and Tarjan gave an $O(n^{2.5} \log(nN))$ time cost/capacity scaling algorithm \cite{GR91} to solve OT, where $N$ is the largest element in the scaled cost matrix. Using cost/capacity scaling techniques, min-cost flow algorithms such as network simplex also provide exact algorithms for the optimal transport problem in $O(n^3 \log n \log(nN))$ time \cite{OTML19}. There are also studies on certain kinds of OT problems, such as geometric OT \cite{alvarez2020geometric} \cite{peyre2019computational}. Additionally, there has been significant recent theoretical work studying the runtime of solving mincost flow, which generalizes OT \cite{LS19,matching,mincost,CKLPPGS22}. These methods rely heavily on second order methods and primitives from graph theory.

\paragraph{Regularization} In machine learning, regularization is widely used to resolve various kinds of datasets' heterogeneity \cite{tian2022comprehensive,zhu2018machine,neyshabur2017implicit,goodfellow2016regularization}. Recently, there have been more works on developing adaptive regularization methods, including deep learning on imbalanced data \cite{cao2020heteroskedastic} and learning neural networks \cite{NEURIPS2019_2281f5c8}. There are also studies on regularization hyperparameter selection \cite{luketina2016scalable,leung1999adaptive}. 


\subsection{Notation} We use bold lowercase characters such as $\va$ to denote vectors. Specially, we use $\vone$ or $\vone_n$ to denote the all ones vector with proper length. We use bold capital letters (such as $\mQ$) as matrices. Specially, we denote the matrix that we are rescaling as $\mX$. We denote the inner product of two matrix as $\langle \cdot, \cdot \rangle$, so $\langle \mX, \mQ \rangle = \sum_{i\in[n],j\in[m]} \mX_{ij}\mQ_{ij}$. We use the integral vectors $\vr \in \mathbb{Z}^n$ and $\vc \in \mathbb{Z}^m$ to denote the desired row and column sums. Note that the matrix $\mX$ has row sums $\mX \vone$ and column sums $\mX^\top \vone$. We use $\alpha_i$ for $i \in [n]$ and $\beta_j$ for $j \in [m]$ as the dual variables in our matrix scaling algorithm. As above, $\eta$ is the regularization parameter.

\section{Matrix Scaling with Regularization Scheduling}
\label{sec2}

We propose an algorithm \textsc{ExpSinkhorn} to solve the OT problem to high accuracy. The algorithm maintains a matrix $\mX$ to be scaled and a regularization parameter $\eta$. It rescales the rows and columns iteratively for this fixed parameter $\eta$. When the rows and columns are close enough to scaled, the algorithm doubles $\eta$. We ultimately show that this algorithm converges in time depending logarithmically on $\eps^{-1}$ (see Theorem \ref{theo:algo}), as opposed to the standard Sinkhorn algorithm requiring time depending polynomially on $\eps^{-1}$ to converge \cite{Cut13,altschuler2017near}.

The analysis of our algorithm hinges on understanding the interaction between the $\ell_1$ error of the row/column scaling and a \emph{dual objective}. Formally, when the quantities $ \|\mX\vone-\vr\|_1 \enspace \text{ and } \enspace \|\mX^\top\vone - \vc\|_{1} $
are small, the algorithm doubles the regularization parameter $\eta$. We show that when they are large, then rescaling the rows or columns of $\mX$ causes the \emph{dual objective} to significantly improve (see Lemma \ref{lemma:dualincrease}). We also prove that when the $\ell_1$ errors are small, the duality gap is small (see Lemma \ref{lemma:dualitygap}), which bounds the number of iterations (see Lemma \ref{lemma:itercount}).

We now formally present our matrix scaling algorithm that doubles $\eta$ over time to give a high accuracy solution to optimal transport.

\begin{algorithm}[h]
\caption{\textsc{ExpSinkhorn}$(\mQ, \vr, \vc, \epsilon)$ - Solves the optimal transport problem.}
\label{algo:scaling}
    \KwIn{A $n \times m$ cost matrix $\mQ$.}
    \KwOut{A $n \times m$ matrix $\mX_{i j} \geq 0$ such that $\mX \in \mU(\vr,\vc)$ and $\langle \mX, \mQ \rangle \le OPT + \epsilon$, where%
\[ 
OPT \defeq \min_{\mX_{i j} \geq 0, \mX \in \mU(\vr, \vc)} \langle \mX, \mQ \rangle. \]}

\DontPrintSemicolon
\everypar={\nl} 
$\mu \assign \max\{\max_{i \in [n]} r_i, \max_{j\in[m]} c_j\}.$ \par
$\vr \assign \vr/\mu, \vc \assign \vc/\mu$ \Comment{Scale $\vr, \vc$ to have $\|\vr\|_\infty \le 1, \|\vc\|_\infty \le 1$.} \label{line:scaler} \par
$\eta \assign 10\|\mQ\|_{\infty}^{-1} \log(n\mu).$ \Comment{Starting value of $\eta$.} \par
$\alpha_i \assign -\|\mQ\|_\infty, \beta_j \assign -\|\mQ\|_\infty$ for $i\in[n]$, $j\in[m]$. \Comment{Dual variable initialization} \par
\While{$\eta \le 4\mu\eps^{-1}\|\vr\|_1\log(n\mu)$}{ \label{line:mainwhile}
    $\mX_{ij} \assign \exp(\eta(\alpha_i+\beta_j-\mQ_{ij})).$ \Comment{Initialize matrix to be scaled.} \par
    \For{$k \ge 0$}{
        $\va \assign \mX\vone.$ \Comment{Row sums} \par
        $\vb \assign \mX^T\vone.$ \Comment{Column sums} \par
        
        \If{$\|\va-\vr\|_1 > 1/(2\mu)$}{\label{line:ifrow}
            $\mX_{ij} \assign (\va_i/\vr_i)^{-1}\mX_{ij}$ for $1 \le i \le n$, $1 \le j \le m$ \Comment{Row scaling} \label{line:rescale} \par
            $\alpha_i \assign \alpha_i - \eta^{-1}\log(\va_i/\vr_i)$ for $1 \le i \le n$ \label{line:adjustr} \Comment{Row dual adjustment} \par
        }\uElseIf{$\|\vb-\vc\|_1 > 1/(2\mu)$}{\label{line:ifcol}
            $\mX_{ij} \assign (\vb_j/\vc_j)^{-1}\mX_{ij}$ for $1 \le i \le n$, $1 \le j \le m$ \label{line:rescale2} \Comment{Column scaling}\par
            $\beta_j \assign \beta_j - \eta^{-1}\log(\vb_j/\vc_j)$ for $1 \le j \le m$ \Comment{Row dual adjustment} \label{line:adjustc}
        } \Else {
            {\label{line:rif}
                $\eta \assign 2\eta$ and return to line \ref{line:mainwhile}. \label{line:double} \par
            }
        }
    }
}
$\mX \assign \mu\mX$ \Comment{Scale $\mX$ back up.} \par
Repair the demands routed by $\mX$ and return $\mX$.
\end{algorithm}
We will assume throughout this analysis that $\|\vr\|_\infty, \|\vc\|_\infty \le 1$, such that $\mu\vr, \mu\vc \in \mathbb{Z}^n$. This is because we scale $\vr, \vc$, which are originally in $\mathbb{Z}^n$, down by $\mu$ in line \ref{line:scaler} of Algorithm \ref{algo:scaling}.

The analysis is based on looking at the dual program of the optimal transport objective: \[ \max_{\alpha_i+\beta_j \le \mQ_{ij} \forall i \in [n], j \in [m]} \sum_{i\in[n]} \vr_i\alpha_i + \sum_{j\in[m]} \vc_j\beta_j. \]
The value of this program is also $OPT$, the same as the value of the optimal transport objective $\min_{\mX \ge 0, \mX \in \mU(\vr, \vc)} \langle \mX, \mQ \rangle$ by linear programming duality.

Thus, as long as we can guarantee that the $\alpha_i, \beta_j$ parameters in Algorithm \ref{algo:scaling} always satisfy $\alpha_i + \beta_j \le \mQ_{ij}$, then the dual potential $D := \sum_{i\in[n]} \vr_i\alpha_i + \sum_{j\in[m]} \vc_j\beta_j \le OPT$ at all times. We will show these by induction.
\begin{lemma}[Algorithm invariants]
\label{lemma:valid}
At all times during an execution of Algorithm \ref{algo:scaling}, we have that $\mX_{ij} \le 1$ for all $i \in [n], j \in [m]$ and $\sum_{i, j} \mX_{ij} \le \|\vr\|_1$. Hence $\alpha_i + \beta_j \le \mQ_{ij}$ at all times.
\end{lemma}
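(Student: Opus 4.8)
The plan is to carry, by induction over the operations performed by Algorithm~\ref{algo:scaling}, three invariants that hold throughout the main loop: (A) $\mX_{ij}=\exp(\eta(\alpha_i+\beta_j-\mQ_{ij}))$ for all $i,j$; (B) $\mX_{ij}\le1$ for all $i,j$; and (C) $\sum_{i,j}\mX_{ij}\le\|\vr\|_1$. The lemma's conclusion then drops out: by (A) and (B) we get $\exp(\eta(\alpha_i+\beta_j-\mQ_{ij}))\le1$, and since $\eta>0$ this is equivalent to $\alpha_i+\beta_j\le\mQ_{ij}$. Throughout I would use the normalization $\|\vr\|_\infty,\|\vc\|_\infty\le1$ established on line~\ref{line:scaler}, the (necessary) hypotheses $\vr,\vc>0$ entrywise and $\|\vr\|_1=\|\vc\|_1$, and the consequence $\|\vr\|_1=\|\vc\|_1\ge\max\{\|\vr\|_\infty,\|\vc\|_\infty\}=1$ after line~\ref{line:scaler}.

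I would dispatch (A) first, since it is pure bookkeeping: it holds immediately after each execution of $\mX_{ij}\assign\exp(\eta(\alpha_i+\beta_j-\mQ_{ij}))$ at the top of the \texttt{while} loop, and a row scaling (lines~\ref{line:rescale}--\ref{line:adjustr}) multiplies $\mX_{ij}$ by $(\va_i/\vr_i)^{-1}$ while subtracting $\eta^{-1}\log(\va_i/\vr_i)$ from $\alpha_i$, which multiplies $\exp(\eta(\alpha_i+\beta_j-\mQ_{ij}))$ by exactly the same factor; column scalings are symmetric, and doubling $\eta$ leaves $\alpha,\beta$ fixed and is followed at once by a re-initialization of $\mX$ that restores (A).

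With (A) available, (B) and (C) follow by checking the three kinds of steps. At the first entry to the loop, $\alpha_i=\beta_j=-\|\mQ\|_\infty$ and $\eta=10\|\mQ\|_\infty^{-1}\log(n\mu)$, so $\alpha_i+\beta_j-\mQ_{ij}\le-\|\mQ\|_\infty$ (using $|\mQ_{ij}|\le\|\mQ\|_\infty$), hence $\mX_{ij}\le(n\mu)^{-10}\le1$ and $\sum_{i,j}\mX_{ij}\le nm(n\mu)^{-10}\le1\le\|\vr\|_1$ (the constant $10$ is chosen precisely so this tiny initial mass stays below $1$). After a row scaling, the (new) row sums equal $\vr$, so $\sum_{i,j}\mX_{ij}^{\new}=\sum_i\vr_i=\|\vr\|_1$; moreover $\mX_{ij}^{\new}=\vr_i\,\mX_{ij}\big/\sum_{j'}\mX_{ij'}\le\vr_i\le\|\vr\|_\infty\le1$ since every $\mX_{ij'}>0$. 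Column scalings are symmetric, using $\sum_j\vc_j=\|\vc\|_1=\|\vr\|_1$. Finally, when $\eta$ is doubled on line~\ref{line:double} and $\mX$ is re-initialized, the inductive hypothesis and (A) give $\alpha_i+\beta_j-\mQ_{ij}\le0$ at that moment, so $\exp(2\eta(\alpha_i+\beta_j-\mQ_{ij}))\le\exp(\eta(\alpha_i+\beta_j-\mQ_{ij}))$ entrywise, which preserves both (B) and (C) from the previous state.

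The one place that needs care is exactly this last step: the invariants are interdependent there, since to know that doubling $\eta$ cannot inflate any entry of $\mX$ one already needs $\alpha_i+\beta_j\le\mQ_{ij}$, which in turn comes from (A) together with the inductive form of (B). Running all three invariants simultaneously through a single induction closes this loop, and everything else is a routine per-operation verification with no estimate beyond the elementary ones above.
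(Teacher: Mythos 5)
Your proof is correct and takes essentially the same approach as the paper: a single induction over the algorithm's three kinds of steps (initialization, row/column rescaling, $\eta$-doubling followed by re-initialization of $\mX$), with the base case giving exponentially small entries, rescaling keeping row/column sums at $\vr$ or $\vc$, and doubling $\eta$ amounting to squaring entries that are already at most $1$. The only presentational difference is that you carry the identity $\mX_{ij}=\exp(\eta(\alpha_i+\beta_j-\mQ_{ij}))$ explicitly as invariant (A), whereas the paper treats it implicitly when writing $\mX^{\new}_{ij}=\mX_{ij}^2$.
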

\begin{proof}
The ``hence'' part follows because $\mX_{ij} = \exp(\eta(\alpha_i + \beta_j - \mQ_{ij}))$, so if $\mX_{ij} \le 1$ then $\alpha_i + \beta_j \le \mQ_{ij}$. Thus, in this proof we focus on showing the claims about $\mX$.

We will proceed by induction. We first check that all conditions hold at the start of the algorithm. For the initial choices of $\eta, \alpha_i, \beta_j$ we have that
\[ \mX_{ij} = \exp(10\|\mQ\|_\infty^{-1} \log n(-2\|\mQ\|_\infty + \mQ_{ij})) \le \exp(-10\log n) \le n^{-10}. \]
Hence, $\sum_{ij} \mX_{ij} \le n^{-8} \le 1 \le \|\vr\|_1$, and $\mX_{ij} \le 1$ for all $i, j$.

Now, we check that the condition continues to hold after we double $\eta$ in line \ref{line:double}. Let $\mX^{\new}$ be the new matrix after $\eta$ is doubled. Clearly, $\mX^{\new}_{ij} = \mX_{ij}^2 \le \mX_{ij}$ because $\mX_{ij} \le 1$ by induction. So $\sum_{ij} \mX^{\new}_{ij} \le \sum_{ij} \mX_{ij} \le \|\vr\|_1$ by induction, and $\mX^{\new}_{ij} = \mX_{ij}^2 \le \mX_{ij} \le 1$.

Finally, we check the conditions after a rescaling step in lines \ref{line:rescale}, \ref{line:rescale2}. By symmetry, we consider a row rescaling step in line \ref{line:rescale}. After such a step, we know that $\sum_{j \in [m]} \mX_{ij} = \vr_i$ for all $i \in [n]$. Because $\|\vr\|_\infty \le 1$, we deduce that $\mX_{ij} \le 1$ for all $i, j$ and $\sum_{ij} \mX_{ij} \le \|\vr\|_1$ as desired. The same argument applies to a column rescaling in line \ref{line:rescale2}, if we note that $\|\vc\|_1 = \|\vr\|_1$.
\end{proof}
Because the dual potentials $\alpha_i, \beta_j$ are feasible, we know that the dual potential is upper bounded.
\begin{corollary}[Dual potential upper bound]
\label{cor:valid}
During an execution of Algorithm \ref{algo:scaling}, $\alpha_i, \beta_j$ satisfy $D := \sum_{i\in[n]} \vr_i\alpha_i + \sum_{j\in[m]} \vc_j\beta_j \le OPT$ at all times.
\end{corollary}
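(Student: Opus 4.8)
The statement to prove is Corollary~\ref{cor:valid}, which follows almost immediately from Lemma~\ref{lemma:valid} and linear programming duality. Let me sketch how I would write the proof.

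The plan is to combine two facts already established in the excerpt. First, Lemma~\ref{lemma:valid} guarantees that at every point during the execution, the dual variables satisfy the feasibility constraint $\alpha_i + \beta_j \le \mQ_{ij}$ for all $i \in [n], j \in [m]$. Second, the text just before the corollary recalls that the dual program of the optimal transport LP is $\max_{\alpha_i + \beta_j \le \mQ_{ij}} \sum_{i} \vr_i \alpha_i + \sum_j \vc_j \beta_j$, and that by LP duality its optimal value equals $OPT$. Putting these together: since $(\alpha, \beta)$ is a feasible point for the dual maximization program, its objective value $D = \sum_i \vr_i \alpha_i + \sum_j \vc_j \beta_j$ can be at most the optimal value of that program, which is $OPT$. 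Hence $D \le OPT$ at all times.

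Concretely, I would write: by Lemma~\ref{lemma:valid}, at all times $\alpha_i + \beta_j \le \mQ_{ij}$ for all $i,j$, so $(\alpha,\beta)$ is a feasible solution to the dual LP $\max_{\alpha_i+\beta_j \le \mQ_{ij}} \sum_i \vr_i\alpha_i + \sum_j \vc_j\beta_j$. Since every feasible objective value is at most the optimum, and the optimum equals $OPT$ by strong LP duality (noting $\vr, \vc \ge 0$ so the optimal transport polytope $\mU(\vr,\vc)$ is nonempty and the primal is bounded), we conclude $D = \sum_i \vr_i\alpha_i + \sum_j \vc_j\beta_j \le OPT$ throughout the execution.

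There is essentially no obstacle here — the corollary is a one-line consequence of the preceding lemma plus duality, which is exactly why it is stated as a corollary rather than a lemma. The only things worth a sentence of care are (i) citing Lemma~\ref{lemma:valid} for the feasibility of the duals rather than re-deriving it, and (ii) making sure the duality invocation is clean, i.e.\ that the primal is feasible and bounded so that strong duality applies and the dual value is genuinely $OPT$ and not merely an upper bound via weak duality. In fact weak duality alone suffices for the statement $D \le OPT$: for any primal feasible $\mX \in \mU(\vr,\vc)$ and any dual feasible $(\alpha,\beta)$, one has $\sum_i \vr_i\alpha_i + \sum_j \vc_j\beta_j = \langle \mX, \mathbf{1}\vr^\top \cdot(\text{rows})\rangle \le \langle \mX, \mQ\rangle$, and minimizing over $\mX$ gives $D \le OPT$. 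So I would phrase the proof using weak duality to keep it self-contained and avoid needing strong duality.
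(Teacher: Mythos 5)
Your proof is correct and follows essentially the same route as the paper's: cite Lemma~\ref{lemma:valid} for dual feasibility ($\alpha_i + \beta_j \le \mQ_{ij}$), then invoke LP duality to conclude $D \le OPT$. Your added observation that weak duality alone suffices (expanding $D = \sum_{ij}\mX_{ij}(\alpha_i+\beta_j) \le \langle \mX,\mQ\rangle$ for any primal-feasible $\mX$) is a slight sharpening over the paper's appeal to strong duality, but the substance is identical.
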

\begin{proof}
By Lemma \ref{lemma:valid} we know that $\alpha_i + \beta_j \le \mQ_{ij}$ at all times. As noted above, by linear programming duality \[ D \le \max_{\alpha_i + \beta_j \le \mQ_{ij} \forall i\in[n],j\in[m]} \sum_{i\in[n]} \vr_i\alpha_i + \sum_{j\in[m]} \vc_j\beta_j = OPT. \]
\end{proof}
The remainder of the analysis requires the following claims. First, we show that the duality gap $OPT - D$ is small when $\|\vr-\va\|_1 \le 1/(2\mu)$ and $\|\vc-\vb\|_1 \le 1/(2\mu)$ trigger, i.e. line \ref{line:double}. When these do not hold, we show that a rescaling step in lines \ref{line:adjustr} or \ref{line:adjustc} causes $D$ to significantly increase. Finally, we will show how to round our approximately scaled solution $\mX$ to a feasible point.

Towards this, we show the following useful helper lemma which intuitively shows that an approximately feasible $\mX$ ``contains'' half of a truly feasible solution.
\begin{lemma}[Containing a feasible solution]
\label{lemma:hatmx}
Let $\vr, \vc$ be vectors with $\|\vr\|_1, \|\vc\|_1 \le 1$ and $\mu\vr, \mu\vc \in \mathbb{Z}^n$. If $\mX \ge 0$ satisfies $\mX\vone = \vr$ and $\|\mX^\top\vone - \vc\|_1 \le 1/(2\mu)$, then there is a vector $\hat{\mX} \in \R^{n \times m}$ with $0 \le \hat{\mX}_{ij} \le \mX_{ij}$ for all $i \in [n], j \in [m]$ and $\hat{\mX}\vone = \vr/2$ and $\hat{\mX}^\top\vone = \vc/2$.

Additionally, such an $\hat{\mX}$ can be found by running any maximum flow algorithm.
\end{lemma}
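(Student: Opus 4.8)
The plan is to exhibit $\hat{\mX}$ as (one half of) a fractional flow that simultaneously routes the demands $\vr$ on the "row side" and the demands $\vc$ on the "column side", with $\hat{\mX}_{ij}$ serving as the flow across edge $(i,j)$ and the constraint $\hat{\mX}_{ij}\le \mX_{ij}$ acting as a capacity. Concretely, I would build a bipartite transportation network: a source $s$ connected to each row vertex $i$ with capacity $\vr_i/2$, each row vertex $i$ connected to each column vertex $j$ with capacity $\mX_{ij}$, and each column vertex $j$ connected to a sink $t$ with capacity $\vc_j/2$. Any feasible $s$--$t$ flow of value $\|\vr\|_1/2$ in this network yields, by reading off the middle-layer flow values, a matrix $\hat{\mX}$ with $0\le\hat{\mX}_{ij}\le\mX_{ij}$, with row sums exactly $\vr_i/2$ (since the $s$--$i$ edges are saturated), and with column sums at most $\vc_j/2$; a standard correction (or observing that total outflow equals total inflow into $t$ forces every $t$-edge to saturate as well) gives column sums exactly $\vc_j/2$. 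So the lemma reduces to showing this network admits a flow saturating all source edges, which by max-flow/min-cut is equivalent to showing every $s$--$t$ cut has capacity at least $\|\vr\|_1/2$.

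The core step, then, is the min-cut lower bound. A cut is specified by a set $S\subseteq[n]$ of row vertices and a set $T\subseteq[m]$ of column vertices placed on the sink side; its capacity is $\tfrac12\sum_{i\notin S}\vr_i + \sum_{i\in S, j\notin T}\mX_{ij} + \tfrac12\sum_{j\in T}\vc_j$. We must show this is at least $\tfrac12\|\vr\|_1 = \tfrac12\sum_i \vr_i$, i.e.
\[
\sum_{i\in S, j\notin T}\mX_{ij} + \tfrac12\sum_{j\in T}\vc_j \;\ge\; \tfrac12\sum_{i\in S}\vr_i.
\]
Here I would use the hypotheses $\mX\vone=\vr$ and $\|\mX^\top\vone-\vc\|_1\le 1/(2\mu)$. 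From $\mX\vone=\vr$ we get $\sum_{i\in S}\vr_i = \sum_{i\in S, j\in T}\mX_{ij} + \sum_{i\in S, j\notin T}\mX_{ij}$, so the inequality rearranges to
\[
\tfrac12\sum_{i\in S, j\notin T}\mX_{ij} + \tfrac12\sum_{j\in T}\vc_j \;\ge\; \tfrac12\sum_{i\in S, j\in T}\mX_{ij},
\]
and since $\sum_{i\in S, j\in T}\mX_{ij}\le\sum_{i\in[n], j\in T}\mX_{ij} = \sum_{j\in T}(\mX^\top\vone)_j$, it suffices that $\sum_{j\in T}\vc_j \ge \sum_{j\in T}(\mX^\top\vone)_j - \sum_{i\in S, j\notin T}\mX_{ij}$, which certainly follows (dropping the last nonnegative term) from $\sum_{j\in T}\vc_j \ge \sum_{j\in T}(\mX^\top\vone)_j$. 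That last inequality need not hold exactly, but the $\ell_1$ closeness gives $\sum_{j\in T}(\mX^\top\vone)_j \le \sum_{j\in T}\vc_j + 1/(2\mu)$, leaving a deficit of at most $1/(2\mu)$.

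The remaining subtlety — which I expect to be the main obstacle — is closing that $1/(2\mu)$ gap, and this is exactly where integrality enters. The point is that $\mu\vr$ is an integer vector, so $\sum_{i\in S}\vr_i$ is an integer multiple of $1/\mu$; hence if the cut capacity were strictly below $\tfrac12\sum_i\vr_i$ it would have to fall short by at least... well, one has to be careful since $\mX_{ij}$ and the cut capacity itself are not integral multiples of $1/\mu$. The cleaner route is to argue on the min-cut of the network directly: the min-cut value, as an optimum of a flow LP with the only "fractional" data being the $\mX_{ij}$ capacities, equals the max-flow value, and I'd show that after saturating all source and sink edges (total mass $\tfrac12\|\vr\|_1$ on each side, with $\tfrac12\vr,\tfrac12\vc$ summing to the same value $\tfrac12\|\vr\|_1$) any residual infeasibility is confined to a single "slack" of size at most $\|\mX^\top\vone-\vc\|_1/2\le 1/(4\mu)$, which can be absorbed because $\mu\vr,\mu\vc\in\mathbb{Z}$ forces the relevant discrete quantities to be multiples of $1/\mu$ — so a shortfall that is both $<1/(2\mu)$ and a multiple of (a suitable unit of) $1/\mu$ must in fact be zero. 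I would isolate this counting argument as the crux, state it for the bipartite cut function above, and then invoke the max-flow/min-cut theorem to produce $\hat{\mX}$; the "can be found by running any maximum flow algorithm" clause then follows immediately from the construction.
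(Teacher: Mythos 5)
Your approach is genuinely different from the paper's. The paper argues by contradiction on the maximal feasible scaling: it defines $\alpha\ge 0$ to be maximal with $0\le\hat{\mX}\le\mX$, $\hat{\mX}\vone=\alpha\vr$, $\hat{\mX}^\top\vone=\alpha\vc$, assumes $\alpha<1/2$, rescales the residual $\mX-\mY$ by $(1-\alpha)^{-1}\mu$ so the row/column targets become the integer vectors $\mu\vr,\mu\vc$ with $\ell_1$ slack strictly below $1$, and then applies a weighted Hall's theorem to the \emph{support} of the rescaled residual to push an extra trickle of flow, contradicting maximality. Because Hall's condition there is a purely combinatorial statement about the support, the only numbers compared are integers, and the strict-inequality-by-less-than-one trick closes immediately. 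Your approach skips the $\alpha$/residual dance and goes straight at the min-cut of the bipartite flow network with capacities $\vr_i/2$, $\mX_{ij}$, $\vc_j/2$, which is more direct (and, incidentally, makes the ``can be found by max flow'' clause manifest from the start, rather than as an afterthought).

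The place where you hand-wave, though, does need a repair, and it is exactly the spot the paper's indirection was designed to avoid. You correctly flag that the cut value is \emph{not} a multiple of $1/\mu$ because of the middle-layer capacities $\mX_{ij}$, and then you gesture at ``a suitable unit of $1/\mu$'' without pinning down what that discrete quantity is. As literally stated (``a shortfall that is both $<1/(2\mu)$ and a multiple of $1/\mu$ must be zero'') the argument is false, since the shortfall inherits fractionality from the $\mX_{ij}$. The correct repair is to apply integrality to $\Delta:=\sum_{i\in S}\vr_i-\sum_{j\in T}\vc_j$, which \emph{is} a multiple of $1/\mu$, rather than to the cut deficit. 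Writing $M:=\sum_{i\in S,\,j\notin T}\mX_{ij}$ one checks, using $\mX\vone=\vr$, that the cut value equals $\tfrac12\|\vr\|_1-\tfrac12\Delta+M$, so it suffices to show $\Delta\le 2M$. From $\sum_{i\in S}\vr_i=\sum_{i\in S,j\in T}\mX_{ij}+M$ and $\sum_{j\in T}\vc_j\ge\sum_{i\in S,j\in T}\mX_{ij}-1/(2\mu)$ one gets $\Delta\le M+1/(2\mu)$. If $M\ge 1/(2\mu)$ then $\Delta\le 2M$ outright; if $M<1/(2\mu)$ then $\Delta<1/\mu$, and since $\Delta\in\mu^{-1}\mathbb{Z}$ this forces $\Delta\le 0\le 2M$. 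Either way the cut is at least $\tfrac12\|\vr\|_1$, and the rest of your argument goes through. So: right idea, right place to put integrality, but the counting argument you defer is not a one-liner about the cut value itself and needs to be routed through $\Delta$.
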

Clearly we may swap the roles of $\vr, \vc$ above. We state only one case in Lemma \ref{lemma:hatmx} for brevity.
\begin{proof}
Let $\alpha \ge 0$ be maximal so that there exists a $0 \le \hat{\mX} \le \mX$ such that $\hat{\mX}\vone=\alpha\vr$ and $\hat{\mX}^\top\vone=\alpha\vc$. Let $\mY$ satisfy $\mY\vone=\alpha\vr$ and $\mY^\top\vone=\alpha\vc$. We wish to show that $\alpha\ge1/2$.

Assume $\alpha < 1/2$ for contradiction, and let $\mX^{(1)} = \mX - \mY$, so that $\mX^{(1)}\vone = (1-\alpha)\vr$ and $\|(\mX^{(1)})^\top\vone - (1-\alpha)\vc\|_1 \le 1/(2\mu)$. Multiplying the previous equations by $(1-\alpha)^{-1}\mu$ on both sides yields that
\begin{align}
    \bar{\mX}\vone = \mu\vr \enspace \text{ and } \enspace \left\|\bar{\mX}^\top\vone - \mu\vc\right\|_1 \le \frac{1}{2(1-\alpha)} < 1, \label{eq:scaledup}
\end{align}
where $\bar{\mX} := (1-\alpha)^{-1}\mu\mX^{(1)}$. Note that if there exists $0 \le \mZ \le \bar{\mX}$ such that $0 \le \mZ \le \bar{\mX}$
and $\delta > 0$ with $\mZ\vone = \delta\vr$ and $\mZ^T\vone = \delta\vc$, then letting $\mW = (1 - \alpha)\mu^{-1}\mZ + \mY$ gives that $\mW \leq \mX^{(1)} + \mY \leq \mX$, and $\mW \vone = (\alpha + (1-\alpha)\mu^{-1}\delta)\vr, \mW^T\vone = (\alpha + (1-\alpha)\mu^{-1}\delta)\vc$, contradicting the maximality of $\alpha$. Thus, it suffices to use the fact that both $\mu\vr$ and $\mu\vc$ are integral vectors to construct $0 \le \mZ \le \bar{\mX}$ and $\delta > 0$ such that $\mZ\vone = \delta\vr$ and $\mZ^T\vone = \delta\vc$.

Let $E$ be the support of $\bar{\mX}$, i.e. $E := \left\{(i, j) : \bar{\mX}_{ij} > 0 \right\}$. For a subset $S \subseteq [n]$, let $N(S) := \{t : \exists s \in S, (s, t) \in E\}$, i.e. the neighborhood of $S$. By Hall's marriage theorem (for weighted sources and sinks), the subset $E$ supports a flow between $\mu\vr$ and $\mu\vc$ as long as for all subsets $S \subseteq [n]$, we have that $\sum_{s \in S} (\mu\vr)_s \le \sum_{t \in N(S)} (\mu\vc)_t$. By the guarantee in (\ref{eq:scaledup}) we know that
\begin{align*}
\sum_{s \in S} (\mu\vr)_s &= \sum_{(s, t) \in E} \bar{\mX}_{st} \le \sum_{t \in N(S)} \sum_{s \in [n]} \bar{\mX}_{st} \\ &\le \sum_{t \in N(S)} (\mu\vc)_t + \|\bar{\mX}^\top\vone - \mu\vc\|_1 < \sum_{t \in N(S)} (\mu\vc)_t + 1.
\end{align*}
Because $\sum_{s \in S} (\mu\vr)_s$ and $\sum_{t \in N(S)} (\mu\vc)_t$ are both integral quantities, the previous equation implies that $\sum_{s \in S} (\mu\vr)_s \le \sum_{t \in N(S)} (\mu\vc)_t$ as desired. This shows that there is some $0 \le \mZ \le \bar{\mX}$ and strictly positive $\delta > 0$ such that $\mZ\vone = \delta\vr$ and $\mZ^T\vone = \delta\vc$. This completes the proof.
\end{proof}
The above lemma lets us bound the duality gap right before we double $\eta$, i.e. when line \ref{line:double} occurs.
\begin{lemma}[Duality gap]
\label{lemma:dualitygap}
Let $D = \sum_{i \in [n]} \alpha_i\vr_i + \sum_{j \in [m]} \beta_j\vc_j$. During an execution of Algorithm \ref{algo:scaling} when line \ref{line:double} occurs, we have that $OPT - D \le 2\eta^{-1}\|\vr\|_1 \log(n\mu)$.
\end{lemma}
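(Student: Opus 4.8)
The plan is to bound $OPT - D$ by exhibiting a feasible primal solution whose cost is close to $D$, using the entropic-regularization structure of $\mX$ together with \Cref{lemma:hatmx}. When line \ref{line:double} triggers we know $\|\mX\vone - \vr\|_1 \le 1/(2\mu)$ and $\|\mX^\top\vone - \vc\|_1 \le 1/(2\mu)$; moreover, by \Cref{lemma:valid}, $\mX_{ij} = \exp(\eta(\alpha_i + \beta_j - \mQ_{ij})) \le 1$, so taking logs gives the pointwise identity $\eta(\alpha_i + \beta_j - \mQ_{ij}) = \log \mX_{ij}$, i.e. $\langle \mX, \mQ\rangle = \langle \mX, \alpha\vone^\top + \vone\beta^\top\rangle - \eta^{-1}\sum_{ij}\mX_{ij}\log\mX_{ij}$. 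The first inner product on the right is $\sum_i \alpha_i (\mX\vone)_i + \sum_j \beta_j(\mX^\top\vone)_j$, which is close to $D = \sum_i \alpha_i \vr_i + \sum_j \beta_j\vc_j$ since the row/column sums of $\mX$ are within $1/(2\mu)$ of $\vr,\vc$ in $\ell_1$; and the entropy term is controlled because $0 \le \mX_{ij} \le 1$ and $\sum_{ij}\mX_{ij} \le \|\vr\|_1$, which forces $-\sum_{ij}\mX_{ij}\log\mX_{ij} \le \|\vr\|_1 \log(nm) + O(\|\vr\|_1)$, contributing a term of order $\eta^{-1}\|\vr\|_1\log(nm)$ — of the right shape for the claimed bound.

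The subtlety is that $\mX$ itself is not feasible for the primal, so $\langle \mX, \mQ\rangle$ is not directly an upper bound on $OPT$. This is exactly where \Cref{lemma:hatmx} enters. First I would fix up $\mX$ so that one of the two marginal constraints holds exactly — e.g. replace $\mX$ by $\mX' := \diag(\vr/\va)\mX$ so that $\mX'\vone = \vr$ while the column sums move by only $O(1/\mu)$ in $\ell_1$ (using $\mX_{ij}\le 1$ and the $\ell_1$ bound on $\va - \vr$ to control the distortion); alternatively one argues directly on $\mX$ that such a fix exists. Then \Cref{lemma:hatmx} produces $\hat{\mX}$ with $0 \le \hat{\mX} \le \mX'$, $\hat{\mX}\vone = \vr/2$, $\hat{\mX}^\top\vone = \vc/2$; so $2\hat{\mX} \in \mU(\vr,\vc)$ is genuinely feasible, giving $OPT \le 2\langle\hat{\mX},\mQ\rangle \le 2\langle \mX', \mQ\rangle$ provided $\mQ \ge 0$ (which we may assume after shifting, or handle by noting the sign of entries does not affect the argument since $\hat{\mX}\le\mX'$ entrywise only helps when $\mQ_{ij}\ge 0$ and we separately bound the negative part using $\|\mX'\vone\|_1,\|(\mX')^\top\vone\|_1 \le \|\vr\|_1$). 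Combining, $OPT - D \le 2\langle \mX',\mQ\rangle - D$, and plugging in the decomposition above with the entropy bound yields $OPT - D \le 2\eta^{-1}\|\vr\|_1\log(n\mu)$ after absorbing the lower-order $1/\mu$-scale marginal errors (each such error is multiplied by $\|\mQ\|_\infty$ or $\|\alpha\|_\infty,\|\beta\|_\infty$, both $O(\|\mQ\|_\infty)$ by \Cref{lemma:valid}, so the total is $O(\|\mQ\|_\infty/\mu)$, which is dominated since $\eta^{-1} = O(\|\mQ\|_\infty/\log(n\mu))$ only at the start — one must check this at every scale, using that $\eta$ only grows).

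The main obstacle I anticipate is getting the constants and the logarithmic factor exactly right: specifically, bounding the negative entropy $-\sum_{ij}\mX_{ij}\log\mX_{ij}$ by something of the form $\|\vr\|_1\log(n\mu)$ rather than $\|\vr\|_1\log(nm)$ or $\|\vr\|_1\log(nm/\|\vr\|_1)$, and ensuring the $O(1/\mu)$-scale discrepancy terms coming from the inexact marginals are genuinely lower-order at the final (largest) value of $\eta$, not just the initial one. I would handle the entropy bound by the standard fact that $\sum p_{ij}\log(1/p_{ij})$ over $p\ge 0$ with $\sum p_{ij} = P$ is maximized at the uniform distribution, giving $P\log(nm/P) + $ (a term from $P\ne 1$), and then checking $P = \sum\mX_{ij} \le \|\vr\|_1$ together with the algorithm's normalization makes $\log(nm/P) = O(\log(n\mu))$; and I would handle the discrepancy terms by carrying the bound $\max_i|\alpha_i|, \max_j|\beta_j| \le \|\mQ\|_\infty$ (which also follows from \Cref{lemma:valid} since $\mX_{ij}\le 1$ and $\mX_{ij}\ge$ something like $n^{-O(1)}$ at all times) through the telescoped estimate.
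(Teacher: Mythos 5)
Your proposal correctly identifies the main ingredients — the identity $\mQ_{ij} = \alpha_i + \beta_j - \eta^{-1}\log\mX_{ij}$, the entropy bound coming from $\mX_{ij}\le 1$ and $\sum_{ij}\mX_{ij}\le\|\vr\|_1$, and the use of \Cref{lemma:hatmx} to produce a feasible point $2\hat{\mX}$. But the chain $OPT \le 2\langle\hat{\mX},\mQ\rangle \le 2\langle\mX',\mQ\rangle$ introduces a fatal factor-of-two on the dual term. Plugging your decomposition into $\langle\mX',\mQ\rangle$ gives $\langle\mX',\mQ\rangle \approx D - \eta^{-1}\sum_{ij}\mX'_{ij}\log\mX'_{ij}$, so the chain yields $OPT \lesssim 2D + 2\eta^{-1}\|\vr\|_1\log(n\mu)$, i.e.\ $OPT - D \lesssim D + 2\eta^{-1}\|\vr\|_1\log(n\mu)$. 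The extra additive $D$ is not controlled (it can be as large as $OPT$), so the claimed bound does not follow. The need to assume $\mQ\ge 0$ (and the handwave about shifting $\mQ$, which would shift $OPT$ and $D$ but also rescale the error terms since $\sum_{ij}\mX'_{ij}\ne 1$) is a second symptom of the same flaw.

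The fix, which is what the paper does, is to never pass through $\langle\mX',\mQ\rangle$ and instead evaluate $\langle\hat{\mX},\mQ\rangle$ directly using the formula for $\mQ_{ij}$. Because $\hat{\mX}$ has the \emph{exact} half-marginals $\hat{\mX}\vone = \vr/2$, $\hat{\mX}^\top\vone = \vc/2$, the dual pairing comes out to precisely
$\sum_{ij}\hat{\mX}_{ij}(\alpha_i+\beta_j) = D/2$ — no spurious factor of two and no discrepancy terms from inexact marginals. The entropy cross-term is controlled via the monotonicity trick: since $0\le\hat{\mX}_{ij}\le\mX_{ij}$ and $\log\mX_{ij}\le 0$, we have $\sum_{ij}\hat{\mX}_{ij}\log\mX_{ij} \ge \sum_{ij}\mX_{ij}\log\mX_{ij} \ge -\|\vr\|_1\log(n\mu)$ (the last step by Jensen, using $\vr_i\ge 1/\mu$). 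Hence $\langle\hat{\mX},\mQ\rangle = D/2 - \eta^{-1}\sum_{ij}\hat{\mX}_{ij}\log\mX_{ij} \le D/2 + \eta^{-1}\|\vr\|_1\log(n\mu)$, and feasibility of $2\hat{\mX}$ gives $OPT\le 2\langle\hat{\mX},\mQ\rangle\le D + 2\eta^{-1}\|\vr\|_1\log(n\mu)$, which is the lemma. Note this route also makes your worries about $\log(nm)$ vs.\ $\log(n\mu)$, sign of $\mQ$, and the $O(1/\mu)$-scale marginal errors all disappear: no inexact marginal ever enters the computation, and no sign assumption on $\mQ$ is required.
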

\begin{proof}
We only handle the case where $\mX^\top\vone=\vr$, as the other case is symmetric (recall that $\|\vr\|_1 = \|\vc\|_1$). Hence $\mX^\top\vone=\vr$.

By Jensen's inequality, we know that
\begin{align*}
\sum_{i\in[n],j\in[m]} \mX_{ij} \log \mX_{ij} &= -\sum_{i \in [n]} \vr_i \sum_{j \in [m]} \frac{\mX_{ij}}{\vr_i} \log(1/\mX_{ij}) \\ &\ge -\sum_{i \in [n]} \vr_i \log \left( \sum_{j \in [m]} \frac{\mX_{ij}}{\vr_i}\frac{1}{\mX_{ij}}\right) \\ &= \ge -\sum_{i \in [n]} \vr_i \log(m/\vr_i) \ge -\|\vr\|_1 \log(n\mu),
\end{align*}
because $\vr_i \ge \mu^{-1}$ for all $i$, because $\mu\vr \in \mathbb{Z}^n$ by assumption. Let $\hat{\mX}$ be as constructed in Lemma \ref{lemma:hatmx}. Because $\mX_{ij} \le 1$ for all $i, j$ by Lemma \ref{lemma:valid} (so $\log \mX_{ij} \le 0$), we can write
\begin{align*}
    \sum_{i\in[n],j\in[m]} \mX_{ij} \log \mX_{ij} &\le \sum_{i\in[n],j\in[m]} \hat{\mX}_{ij} \log \mX_{ij} = \eta\sum_{i\in[n],j\in[m]} \hat{\mX}_{ij}(\alpha_i + \beta_j - \mQ_{ij}) \\
    &= \eta(D/2 - \langle \hat{\mX}, \mQ \rangle) \le \eta(D/2 - OPT/2),
\end{align*}
where the final inequality follows because $\hat{\mX}\vone = \vr/2$ and $\hat{\mX}^\top\vone = \vc/2$, hence $\langle \hat{\mX}, \mQ \rangle \ge OPT/2$ by the minimality of OPT.
Combining the previous two expressions completes the proof.
\end{proof}
Now, we prove that if line \ref{line:double} does not occur, then the dual solution increases significantly.
\begin{lemma}[Dual increase]
\label{lemma:dualincrease}
Let $\va = \mX\vone$, and consider updating $\alpha$ as in line \ref{line:adjustr}. Then the dual $D := \sum_{i \in [n]} \alpha_i \vr_i + \sum_{j \in [m]} \beta_j \vc_j$ increases by at least \[ \eta^{-1}/10 \cdot \min\{\mu^{-1}, \|\vr\|_1^{-1}\|\va - \vr\|_1^2\}. \]
\end{lemma}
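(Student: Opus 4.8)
The plan is to directly compute how much the dual objective $D$ changes under the row update in line~\ref{line:adjustr}. Only the $\alpha_i$'s change, and $\alpha_i \mapsto \alpha_i - \eta^{-1}\log(\va_i/\vr_i)$, so $D$ increases by exactly $\eta^{-1}\sum_{i\in[n]} \vr_i \log(\vr_i/\va_i)$. Setting $p_i := \vr_i$ and $q_i := \va_i$, this is $\eta^{-1}$ times a weighted KL-type quantity $\sum_i p_i \log(p_i/q_i)$. I would first observe the key feasibility fact: by Lemma~\ref{lemma:valid} we have $\sum_{ij}\mX_{ij} \le \|\vr\|_1$, i.e. $\sum_i \va_i \le \|\vr\|_1 = \sum_i \vr_i = \|p\|_1$, so $\|q\|_1 \le \|p\|_1$; also $\|p\|_1 \le 1$. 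The goal is thus the inequality
\begin{align*}
\sum_{i} p_i \log(p_i/q_i) \ \ge\ \tfrac{1}{10}\min\bigl\{\mu^{-1},\ \|p\|_1^{-1}\|p-q\|_1^2\bigr\}
\end{align*}
whenever $\|q\|_1 \le \|p\|_1 \le 1$, $\mu p \in \Z^n_{\ge 0}$, and (from the fact that line~\ref{line:double} did not trigger) $\|p - q\|_1 > 1/(2\mu)$.

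The main technical step is a Pinsker-style lower bound adapted to unnormalized vectors. First normalize: let $P = p/\|p\|_1$ and $Q = q/\|q\|_1$ be probability vectors. A short calculation gives
\begin{align*}
\sum_i p_i \log(p_i/q_i) \ =\ \|p\|_1\,\KL(P\,\|\,Q) \ +\ \|p\|_1 \log(\|p\|_1/\|q\|_1)\ \ge\ \|p\|_1\,\KL(P\,\|\,Q),
\end{align*}
since $\|q\|_1 \le \|p\|_1$ makes the second term nonnegative — so dropping it only loses a good term. Then Pinsker's inequality gives $\KL(P\,\|\,Q) \ge \tfrac12\|P - Q\|_1^2$, and I would convert $\|P-Q\|_1$ back to $\|p-q\|_1$ via the triangle inequality $\|p - q\|_1 \le \|p\|_1\|P-Q\|_1 + |\,\|p\|_1 - \|q\|_1\,| \le \|p\|_1\|P - Q\|_1 + \|p - q\|_1$... which is circular, so instead I would split into two regimes based on the mass gap $\|p\|_1 - \|q\|_1$ (a cleaner route): either the mass discrepancy is a constant fraction of $\|p - q\|_1$, in which case the term $\|p\|_1\log(\|p\|_1/\|q\|_1)$ alone is at least a constant times $\|p\|_1^{-1}\|p-q\|_1^2$ (using $\log(1+x) \ge x/2$ for bounded $x$, with a separate treatment when $\|q\|_1$ is tiny, using $\|p-q\|_1 > 1/(2\mu)$ to invoke the $\mu^{-1}$ branch), or else $\|q\|_1 \ge \|p\|_1 - \tfrac12\|p-q\|_1$ and then $\|P - Q\|_1 \ge \|p-q\|_1 / (2\|p\|_1)$ up to lower-order terms, so $\|p\|_1 \KL(P\|Q) \ge \tfrac{1}{8}\|p\|_1^{-1}\|p-q\|_1^2$. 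Tracking constants carefully yields the claimed $\tfrac{1}{10}$.

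The $\min$ with $\mu^{-1}$ is what makes the statement robust: Pinsker already gives a clean $\|p-q\|_1^2/\|p\|_1$ bound, and the $\mu^{-1}$ alternative is only needed as a floor in the degenerate case where $\|q\|_1 \ll \|p\|_1$ (so the ratio $\log(\|p\|_1/\|q\|_1)$ blows up favorably but $\|P - Q\|_1$ is hard to control directly); there I would use that $\|p\|_1 \le 1$ and $\mu p$ integral to get $\|p\|_1 \ge \mu^{-1}$ and argue the increase is $\Omega(\|p\|_1 \log(\|p\|_1/\|q\|_1)) = \Omega(\mu^{-1})$ when $\|q\|_1 \le \|p\|_1/2$, while the other regime $\|q\|_1 \ge \|p\|_1/2$ is handled purely by Pinsker. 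The main obstacle I anticipate is bookkeeping the constants across these regimes so they all clear $1/10$; the mathematical content is just (i) the identity decomposing $\sum p_i\log(p_i/q_i)$, (ii) Pinsker, and (iii) the elementary bound $\log(1+x)\ge x/2$ on a bounded range, none of which is deep, but the case split has to be arranged so no case is lost.
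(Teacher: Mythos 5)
Your argument is correct, and it is a genuinely different route from the paper's. The paper proves the bound termwise: it applies the scalar inequality $-\log(1-t) \ge t + \min\{1/10, t^2/3\}$ (valid for all $t<1$) with $t = 1-\va_i/\vr_i$, drops the summed linear part $\sum_i \vr_i(1-\va_i/\vr_i) = \|\vr\|_1 - \|\va\|_1 \ge 0$ (this is where $\|\va\|_1 \le \|\vr\|_1$ from Lemma~\ref{lemma:valid} enters, exactly as in your normalization step), then either some $\min$ clips at $1/10$ — giving $\ge \eta^{-1}\vr_i/10 \ge \eta^{-1}\mu^{-1}/10$ since $\mu\vr$ is integral — or else it applies Cauchy--Schwarz to $\sum_i (\va_i-\vr_i)^2/\vr_i \ge \|\va-\vr\|_1^2/\|\vr\|_1$. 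Your decomposition $\sum_i p_i\log(p_i/q_i) = \|p\|_1\,\KL(P\|Q) + \|p\|_1\log(\|p\|_1/\|q\|_1)$ followed by Pinsker and a case split on the mass gap $\delta := \|p\|_1 - \|q\|_1$ is a clean aggregate alternative, and actually yields a slightly stronger statement: when $\delta > \tfrac12\|p-q\|_1$, already $\|p\|_1\log(\|p\|_1/\|q\|_1) \ge \delta > \tfrac12\|p-q\|_1 \ge \|p-q\|_1^2/(4\|p\|_1)$ using $\log(1/x)\ge 1-x$ and $\|p-q\|_1 \le 2\|p\|_1$; when $\delta \le \tfrac12\|p-q\|_1$, the triangle-inequality estimate $\|P-Q\|_1 \ge (\|p-q\|_1-\delta)/\|p\|_1 \ge \|p-q\|_1/(2\|p\|_1)$ plus Pinsker gives $\ge \|p-q\|_1^2/(8\|p\|_1)$. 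So the $\mu^{-1}$ branch of the $\min$ is never needed in your version, and you can drop the "separate treatment when $\|q\|_1$ is tiny" and the use of $\log(1+x)\ge x/2$ entirely. One small error worth fixing: you assert $\|p\|_1 = \|\vr\|_1 \le 1$, but the algorithm only guarantees $\|\vr\|_\infty \le 1$ after scaling, so $\|\vr\|_1$ can be as large as $n$; fortunately nothing in the Pinsker route uses $\|p\|_1 \le 1$. The tradeoff: the paper's termwise inequality is a single elementary calculus fact plus Cauchy--Schwarz and avoids normalizing; yours imports Pinsker and a KL identity, but in exchange the case analysis is structurally cleaner and eliminates one branch of the $\min$.
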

\begin{proof}
Note the following numerical bound: $-\log(1-t) \ge t + \min\{1/10, t^2/3\}$ for all $t < 1$.
By the formula in line \ref{line:adjustr}, the dual increases by
\begin{align*}
    -\eta^{-1} \sum_{i \in [n]} \vr_i \log(\va_i/\vr_i) &= \eta^{-1} \sum_{i \in [n]} \vr_i(-\log(1-(1-\va_i/\vr_i))) \\ &\ge \eta^{-1} \sum_{i \in [n]} \vr_i \cdot ((1-\va_i/\vr_i) + \min\{(1-\va_i/\vr_i)^2/3, 1/10)\} \\
    &= \eta^{-1} \sum_{i \in [n]} \vr_i \cdot \min\{(1-\va_i/\vr_i)^2/3, 1/10\},
\end{align*}
because $\|\va\|_1 = \sum_{i,j} \mX_{ij} \le \|\vr\|_1$ by Lemma \ref{lemma:valid}. If any of the $\min$'s in the previous expression evaluate to $1/10$, then the expression is clearly at least $\eta^{-1}\vr_i/10 \ge \eta^{-1}/10 \cdot \mu^{-1}$, because $\mu\vr$ is integral. Otherwise, by the Cauchy-Schwarz inequality,
\begin{align*}
    \eta^{-1} \sum_{i \in [n]} \vr_i (1-\va_i/\vr_i)^2/3 = \eta^{-1}/3 \cdot \sum_{i \in [n]} (\va_i - \vr_i)^2/\vr_i \ge \eta^{-1}/3 \cdot \frac{\|\va-\vr\|_1^2}{\|\vr\|_1},
\end{align*}
as desired. This completes the proof.
\end{proof}
We can now bound the total number of iterations of the algorithm.
\begin{lemma}[Iteration count]
\label{lemma:itercount}
For integral vectors $\vr, \vc \in \mathbb{Z}^n$, and $\mu := \max\{\|\vr\|_\infty, \|\vc\|_\infty\}$ an execution of Algorithm \ref{algo:scaling} uses at most $O(\|\vr\|_1^2 \log(n\mu) \log(\eps^{-1}\|\mQ\|_\infty\mu))$ iterations.
\end{lemma}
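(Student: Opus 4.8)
The plan is to bound separately the number of "phases" (i.e.\ distinct values of $\eta$) and, within each phase, the number of rescaling iterations before line~\ref{line:double} triggers. Multiplying these two quantities gives the stated bound.

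**Counting phases.** The algorithm starts with $\eta = 10\|\mQ\|_\infty^{-1}\log(n\mu)$ and terminates once $\eta > 4\mu\eps^{-1}\|\vr\|_1\log(n\mu)$. Since $\eta$ doubles each phase, the number of phases is $O\!\left(\log\frac{4\mu\eps^{-1}\|\vr\|_1\log(n\mu)}{10\|\mQ\|_\infty^{-1}\log(n\mu)}\right) = O(\log(\eps^{-1}\|\mQ\|_\infty\mu\|\vr\|_1))$. Folding $\|\vr\|_1 \le n\mu$ into the logarithm, this is $O(\log(\eps^{-1}\|\mQ\|_\infty\mu) + \log(n\mu)) = O(\log(n\mu)\log(\eps^{-1}\|\mQ\|_\infty\mu))$ after a crude bound — though more simply it is $O(\log(\eps^{-1}\|\mQ\|_\infty n\mu))$, and I will absorb constants to match the lemma's form.

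**Counting iterations within a phase.** This is the crux and relies on \Cref{lemma:dualincrease} and \Cref{lemma:dualitygap}. By \Cref{cor:valid}, $D \le OPT$ always, and by the initial choice of $\alpha_i = \beta_j = -\|\mQ\|_\infty$ we have $D_{\mathrm{init}} = -(\|\vr\|_1 + \|\vc\|_1)\|\mQ\|_\infty \ge -2\|\vr\|_1\|\mQ\|_\infty$, so the total possible increase of $D$ over the \emph{entire} run is at most $OPT - D_{\mathrm{init}} \le \langle \mX^\star, \mQ\rangle + 2\|\vr\|_1\|\mQ\|_\infty = O(\|\vr\|_1\|\mQ\|_\infty)$, since $OPT \le \|\vr\|_1\|\mQ\|_\infty$. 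Now fix a phase with parameter $\eta$. In every iteration that performs a row (resp.\ column) rescale, the triggering condition is $\|\va - \vr\|_1 > 1/(2\mu)$ (resp.\ for columns), so \Cref{lemma:dualincrease} guarantees $D$ increases by at least
\begin{align*}
\eta^{-1}/10 \cdot \min\{\mu^{-1}, \|\vr\|_1^{-1}(1/(2\mu))^2\} = \eta^{-1}/10 \cdot \frac{1}{4\mu^2\|\vr\|_1},
\end{align*}
using $\|\vr\|_1 \ge \mu^{-1} \cdot \mu = 1 \ge$ (more precisely $\|\vr\|_1^{-1}\|\va-\vr\|_1^2 \le \mu^{-1}$ since $\|\va-\vr\|_1 \le \|\vr\|_1+\|\va\|_1 \le 2\|\vr\|_1$ forces the min to be the second term only when small, but in either case the per-iteration gain is $\ge \eta^{-1}\Omega(\mu^{-2}\|\vr\|_1^{-1})$). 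Since $\eta \le 4\mu\eps^{-1}\|\vr\|_1\log(n\mu)$ throughout, wait — this gives a bad $\eps$ dependence, so instead I track the dual increase globally: the number of rescale iterations with parameter $\eta$ is at most $(OPT - D_{\mathrm{init}})/(\text{per-iteration gain}) $; but the per-iteration gain scales as $\eta^{-1}$, and $\sum_{\text{phases}} \eta^{-1}$ is a geometric series dominated by its first term $\eta_{\mathrm{init}}^{-1} = \|\mQ\|_\infty/(10\log(n\mu))$. Therefore the \emph{total} number of rescale iterations across \emph{all} phases is at most
\begin{align*}
\frac{OPT - D_{\mathrm{init}}}{\eta_{\mathrm{init}}^{-1}/10 \cdot \frac{1}{4\mu^2\|\vr\|_1}} \cdot O(1) = O\!\left(\|\vr\|_1\|\mQ\|_\infty \cdot \frac{\log(n\mu)}{\|\mQ\|_\infty} \cdot \mu^2\|\vr\|_1\right) = O(\mu^2\|\vr\|_1^2\log(n\mu)),
\end{align*}
which is too weak; the issue is the $\mu^2$. **The main obstacle** is precisely getting the $\mu$ dependence down to a logarithm: I will need to be more careful and use $\|\vr\|_1 \ge 1$ together with the fact that at the point line~\ref{line:double} is about to be reached \Cref{lemma:dualitygap} gives $OPT - D \le 2\eta^{-1}\|\vr\|_1\log(n\mu)$, so the dual increase \emph{within a single phase} is bounded by the \emph{decrease} in duality gap from phase start to phase end plus the gap opened by doubling $\eta$ (which halves the gap's $\eta^{-1}$, i.e.\ the gap after doubling from the previous phase is $\le 2\eta^{-1}\|\vr\|_1\log(n\mu)$ with the \emph{new} $\eta$). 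Thus within the phase with parameter $\eta$, the dual can increase by at most $O(\eta^{-1}\|\vr\|_1\log(n\mu))$, and dividing by the per-iteration gain $\eta^{-1}\Omega(\|\vr\|_1^{-1}\|\va-\vr\|_1^2) \ge \eta^{-1}\Omega(\mu^{-2}\|\vr\|_1^{-1})$ still leaves a $\mu^2$.

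**Resolution.** The fix is that the per-iteration bound $\|\va - \vr\|_1 > 1/(2\mu)$ combined with $\min\{\mu^{-1}, \|\vr\|_1^{-1}\|\va-\vr\|_1^2\}$: since $\|\va-\vr\|_1^2/\|\vr\|_1$ can be as small as $1/(4\mu^2\|\vr\|_1)$, we want to argue this rarely happens, OR we use the coarser bound that whenever the first branch ($\|\va-\vr\|_1$ large) fails but we are not done, we are in a regime where $\|\va-\vr\|_1 \in (1/(2\mu), \text{large}]$ — I will instead bound iterations in a phase by $O(\|\vr\|_1^2\log(n\mu))$ by noting the dual increase within a phase is $O(\eta^{-1}\|\vr\|_1\log(n\mu))$ and \emph{each} iteration gains $\ge \eta^{-1}/10 \cdot \|\vr\|_1^{-1}\|\va-\vr\|_1^2$ with $\|\va-\vr\|_1 \ge 1/(2\mu)$ — hmm this is the same. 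The genuinely correct argument, which I will carry out, is: within a phase, consider the potential $D$; it increases monotonically, is bounded above by $OPT$, and the total increase within the phase is at most $2\eta^{-1}\|\vr\|_1\log(n\mu) \cdot O(1)$ (gap at phase start minus gap at phase end, both controlled by \Cref{lemma:dualitygap} applied at the boundaries, noting doubling $\eta$ at most maintains the gap form). Each rescaling iteration increases $D$ by $\ge \eta^{-1}/10 \cdot \min\{\mu^{-1},\|\vr\|_1^{-1}\|\va-\vr\|_1^2\}$; but we also know $\|\va-\vr\|_1 > 1/(2\mu)$, and moreover since $\mu\vr$ is integral and $\|\va-\vr\|_1$ involves the discrepancy, actually the cleanest route is: the number of iterations per phase is $\le \frac{2\eta^{-1}\|\vr\|_1\log(n\mu)}{\eta^{-1}/10 \cdot \min\{\mu^{-1}, \|\vr\|_1^{-1}(1/(2\mu))^2\}}$ — I will simply accept this yields the per-phase count, combine with $O(\log(\eps^{-1}\|\mQ\|_\infty\mu))$ phases, and reconcile the $\mu$ powers by using the hypothesis $\|\vr\|_1 \ge \mu$ is false in general but $\|\vr\|_1 \ge 1$ holds, so I suspect the intended bound actually uses $\min\{\mu^{-1}, \ldots\}$ more cleverly: when $\|\va - \vr\|_1 > \sqrt{\|\vr\|_1/\mu}$ the gain is $\ge \eta^{-1}/(10\mu)$ giving $O(\mu \cdot \eta^{-1}\|\vr\|_1\log(n\mu) / (\eta^{-1}\mu^{-1})$... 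I will present the two-regime split — "large discrepancy" iterations (gain $\gtrsim \eta^{-1}\mu^{-1}$, so $O(\mu\|\vr\|_1\log(n\mu))$ of them per phase is wrong too). Given the stated target $O(\|\vr\|_1^2\log(n\mu)\log(\cdots))$ with $\|\vr\|_1 \ge \mu$ always (since $\|\vr\|_1 \ge \|\vr\|_\infty = \mu$ when the max is attained by $r$, and symmetrically; in fact $\|\vr\|_1 \ge \mu$ always because some coordinate equals... no). I will use $\|\vr\|_1 \ge 1$ and $\mu^{-1} \le \|\vr\|_1$, giving per-iteration gain $\ge \eta^{-1}/10 \cdot \|\vr\|_1^{-1}\|\va-\vr\|_1^2 \ge \eta^{-1}/(40\mu^2\|\vr\|_1) \ge \eta^{-1}/(40\|\vr\|_1^3)$ is too weak; the right observation is $\mu \le \|\vr\|_1$ (true since $\vr$ has a coordinate $\ge 1$, so... not necessarily $\mu$). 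I will therefore simply invoke the bound with the $\min$ replaced by $\|\vr\|_1^{-1}\|\va-\vr\|_1^2 \ge \mu^{-2}\|\vr\|_1^{-1} \ge \|\vr\|_1^{-3}$ — and if this doesn't match, the author surely intends $\mu \le \|\vr\|_1$; **I flag the $\mu$-to-$\|\vr\|_1$ conversion as the single step requiring care**, and otherwise the proof is: (phases) $\times$ (per-phase iterations from monotone bounded $D$ with per-step gain from \Cref{lemma:dualincrease} and total slack from \Cref{lemma:dualitygap}) $= O(\log(\eps^{-1}\|\mQ\|_\infty\mu)) \times O(\|\vr\|_1^2\log(n\mu)) = O(\|\vr\|_1^2\log(n\mu)\log(\eps^{-1}\|\mQ\|_\infty\mu))$.
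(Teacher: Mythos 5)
Your decomposition into phases (bounding per-phase iterations by duality gap divided by per-iteration dual gain, then multiplying by the number of doubling phases) is exactly the paper's argument, and your phase count is correct. However, you never close the proof: you correctly compute the per-phase iteration count as $O(\mu^2\|\vr\|_1^2\log(n\mu))$ (from the gap $O(\eta^{-1}\|\vr\|_1\log(n\mu))$ of \Cref{lemma:dualitygap} over the per-step gain $\eta^{-1}/(40\mu^2\|\vr\|_1)$ of \Cref{lemma:dualincrease} using $\|\va - \vr\|_1 > 1/(2\mu)$), observe that this has an apparently extra $\mu^2$, and then flail for several sentences trying to remove it via $\mu \le \|\vr\|_1$ or a cleverer use of the $\min$, finally just ``flagging'' it. That is a genuine gap in the write-up: you end without a valid bound.

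The resolution is bookkeeping, and the paper states it in one parenthetical at the end of its proof: line~\ref{line:scaler} of Algorithm~\ref{algo:scaling} replaces $\vr$ by $\vr/\mu$, and the entire analysis (Lemmas~\ref{lemma:valid}--\ref{lemma:dualincrease}) is carried out for the \emph{scaled} vector, so the $\|\vr\|_1$ in those lemmas equals $\|\vr^{\mathrm{orig}}\|_1/\mu$. Hence $\mu^2\|\vr\|_1^2 = \|\vr^{\mathrm{orig}}\|_1^2$, and the per-phase count is exactly $O(\|\vr^{\mathrm{orig}}\|_1^2\log(n\mu))$ as the Lemma states (whose $\vr$ is the original integral vector). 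No inequality between $\mu$ and $\|\vr\|_1$ is needed. Separately, your intermediate attempt to charge against the \emph{global} dual increase $OPT - D_{\mathrm{init}}$ while ``summing $\eta^{-1}$ as a geometric series'' is wrong-signed: the per-step gain is proportional to $\eta^{-1}$, which \emph{shrinks} across phases, so dividing a fixed global budget by the gain gives a bound that degrades as $\eta$ grows; you must use the per-phase gap (also $\propto \eta^{-1}$) so the $\eta$'s cancel, as you eventually do.
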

\begin{proof}
After doubling $\eta$ the duality gap is at most $4\eta^{-1}\|\vr\|_1 \log(n\mu)$ by Lemma \ref{lemma:dualitygap}. If $\|\va - \vr\|_1 \ge 1/(2\mu)$, then the dual increase is at least $\eta^{-1}/10 \cdot \min\{\mu^{-1}, \|\vr\|_1^{-1}\|\va - \vr\|_1^2\} \ge 1/40 \cdot \eta^{-1}\|\vr\|_1^{-1}\mu^{-2}$. Hence the number of iterations during a doubling phase is bounded by $\frac{4\eta^{-1}\|\vr\|_1 \log(n\mu)}{1/40 \cdot \eta^{-1}\|\vr\|_1^{-1}\mu^{-2}} = O((\mu\|\vr\|_1)^2 \log(n\mu))$. Additionally, the total number of doubling phases is bounded by $\log((4\mu\eps^{-1}\|\vr\|_1\log(n\mu))/(10\|\mQ\|_{\infty}^{-1} \log(n\mu)))$.
Thus, the lemma follows (recall that the $\vr$ in the Lemma statement is really $\mu\vr$ after scaling).
\end{proof}
Finally, we show how to recover a feasible solution from $\mX$, and complete the proof of Theorem \ref{theo:algo}.
\begin{proof}[Proof of Theorem \ref{theo:algo}]
The iteration complexity bound follows from Lemma \ref{lemma:itercount}, so it suffices to explain how to round our final solution $\mX$ to an accurate solution $\mY$.

To construct $\mY$, let $\hat{\mX}$ be as in Lemma \ref{lemma:hatmx}, and let $\mY = 2\hat{\mX}$. By definition, we know that $\mY\vone = 2\hat{\mX}\vone = \vr$, and similarly $\mY^\top\vone = \vc$. To bound the optimality gap of $\mY$, note by the equations in the proof of Lemma \ref{lemma:dualitygap} that $-\|\vr\|_1 \log(n\mu) \le \eta(D/2 - \langle \hat{\mX}, \mQ \rangle)$, so \[ \langle \hat{\mX}, \mQ \rangle \le \eta^{-1}\|\vr\|_1 \log(n\mu) + D/2 \le \eta^{-1}\|\vr\|_1 \log(n\mu) + OPT/2, \] as $D \le OPT$ by Corollary \ref{cor:valid}.
Hence \[ \langle \mY, \mQ \rangle = 2\langle \hat{\mX}, \mQ\rangle \le 2\eta^{-1}\|\vr\|_1 \log(n \mu) + OPT \le OPT + \mu^{-1}\epsilon \] by the ending choice of $\eta$. Because Algorithm \ref{algo:scaling} scales everything down by $\mu$, the error in terms of the original objective is $\epsilon$, as desired. $\mY$ can be computed efficiently by calling maximum flow.
\end{proof}

\section{Reducing to Polynomially Bounded Instances via Scaling}


\label{sec:polyScaling}

In this section, we will present cost and capacity scaling procedures that reduce solving integral OT to instances with polynomially bounded entries and prove \Cref{theo:costCapScaling}.


The following proof can be extended to the case where $n \neq m$ in the OT problem to obtain a $\poly(n,m,\log \frac{1}{\epsilon})$ time algorithm. However, one may find such a proof confusing to read, since $m$, in addition to being the size of the demand vector, also denotes the number of edges in a min-cost circulation instance. Thus, for ease of exposition, we present the proof for $n = m$.

Instead of OT, we consider the problem of finding minimum cost circulation  (MCC) on directed graphs. In the problem of minimum cost circulation, we are given a directed graph $G=(V, E)$ with integral edge costs $\bc \in \pm [C]^E$ and integral capacities $\bu \in [U]^E.$ The goal is to find a circulation $\bf$ viewed as a vector over the set of edges $E$ of minimum cost.
It is formulated as the following linear program:
\begin{align}
\label{eq:MCC}
\min_{\mB^\top \bf = 0, 0 \le \bf \le \bu} \bc^\top \bf
\end{align}
where $\mB$ is the edge-vertex incidence matrix of $G.$
We use $T_{MCC}(n, m, C, U)$ to be the time to find an integral solution that minimizes \eqref{eq:MCC}, given a graph with $n$ vertices and $m$ edges.
We also define $T_{OT}(n, C, U)$ to denote the time for solving \eqref{eq:OT} for $n$-dimensional $\vr, \vc$ within $1/\poly(n)$-additive error where $C$ is the maximum absolute value of costs and $U$ is the maximum demand or supply entries.

We first show that OT can be reduced to MCC.
\begin{lemma}
\label{lemma:OTtoMCC}
Given an integral instance of \eqref{eq:OT}, we have
\begin{align*}
T_{OT}(n, \norm{\mQ}_{\infty}, \mu) = O(n^2) + T_{MCC}(2n, n^2, \norm{\mQ}_{\infty}, \mu).
\end{align*}
\end{lemma}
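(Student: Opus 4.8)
The plan is to give a standard reduction from the optimal transport LP \eqref{eq:OT} to a minimum cost circulation instance on a small bipartite-like graph, following the classical transportation-problem-to-mincost-flow construction. First I would build a graph $G$ on $2n+2$ vertices: one vertex $a_i$ for each row $i \in [n]$, one vertex $b_j$ for each column $j \in [m]$ (with $m=n$), plus a global source $s$ and sink $t$. For each $i$ add an edge $s \to a_i$ with capacity $\vr_i$ and cost $0$; for each $j$ add an edge $b_j \to t$ with capacity $\vc_j$ and cost $0$; for each pair $(i,j)$ add an edge $a_i \to b_j$ with capacity $\mu$ (or $\min\{\vr_i,\vc_j\}$, which is harmless) and cost $\mQ_{ij}$; finally add an edge $t \to s$ with infinite (or $\|\vr\|_1$) capacity and cost $0$ to turn the $s$–$t$ flow into a circulation. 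This graph has $2n+2 = O(n)$ vertices and $n + n + n^2 + 1 = O(n^2)$ edges, costs bounded by $\|\mQ\|_\infty$ (the zero-cost edges are fine), and capacities bounded by $\mu$ (using the finite bound $\|\vr\|_1 \le n\mu$ on the return edge is still $\poly$; one can instead cap it at $\mu$ per $a_i$–$b_j$ edge and note total is $\le \|\vr\|_1$, but for the stated bound $U = \mu$ it is cleanest to observe any feasible circulation uses flow at most $\|\vr\|_1 \le n\mu$ on the return arc, or simply absorb the constant — I would state capacities are $O(\mu)$, matching $U = \mu$ up to the $O(\cdot)$ already in the lemma's formula, or bound the return arc by $n\mu$ which is still captured since the reduction only claims $T_{MCC}(2n, n^2, \|\mQ\|_\infty, \mu)$ and one can subdivide if needed — I would just use $n$ parallel unit-of-$\mu$ copies if strict adherence is wanted).

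Next I would argue the correspondence. Any feasible $\mX \in \mU(\vr,\vc)$ yields a feasible circulation by routing $\mX_{ij}$ on edge $a_i \to b_j$, $\vr_i$ on $s \to a_i$, $\vc_j$ on $b_j \to t$, and $\|\vr\|_1$ on $t \to s$; flow conservation at $a_i$ holds since $\sum_j \mX_{ij} = \vr_i$, at $b_j$ since $\sum_i \mX_{ij} = \vc_j$, and $\bc^\top\bf = \langle \mX, \mQ\rangle$ since all other edges have cost $0$. Conversely, since all capacities $\vr_i,\vc_j$ are saturated is \emph{not} forced; instead I note that in any feasible circulation the flow on $t \to s$ equals $\sum_i f(s\to a_i) = \sum_j f(b_j \to t)$, and to be cost-minimal over a circulation the cheapest thing is to push as much as possible — but actually $\bc^\top\bf$ is minimized at $0$ by the empty circulation, so I must instead add the constraint that forces full saturation. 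The clean fix: make the $s\to a_i$ and $b_j \to t$ edges have cost $-M$ for a large $M = \|\mQ\|_\infty \cdot n + 1$, so that a min-cost circulation is forced to saturate them (hence $f(s\to a_i) = \vr_i$, $f(b_j\to t) = \vc_j$), at which point the $a_i\to b_j$ flows form a matrix in $\mU(\vr,\vc)$ and the residual objective equals $\langle\mX,\mQ\rangle - M\|\vr\|_1$. Since $M \le O(n\|\mQ\|_\infty)$, costs remain $\pm\poly(n)\|\mQ\|_\infty$; if one wants costs exactly $\pm[\|\mQ\|_\infty]$ one can instead use lower bounds $\vr_i$ on edge $s\to a_i$ (demands), which the MCC formulation \eqref{eq:MCC} with $0 \le \bf \le \bu$ does not directly allow, so the $-M$ trick or an explicit lower-bound-to-circulation gadget is the right move; I would present the $-M$ version and remark $C$ becomes $O(n\|\mQ\|_\infty)$, which is absorbed since the lemma already has an $O(n^2)$ additive term and the final theorem only needs $\poly$ dependence — alternatively subtract a constant potential to keep $C = \|\mQ\|_\infty$, noting shifting costs of $s\to a_i$, $b_j\to t$ by the constant $-M$ and of $t\to s$ by $+2M$... this is the routine bookkeeping I would not grind through.

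Finally I would handle the $O(n^2)$ additive term: constructing $G$ from $(\mQ,\vr,\vc)$ takes $O(n^2)$ time (writing down the $n^2$ cost edges), and converting the returned integral circulation back to a matrix $\mX$ (reading off the $a_i\to b_j$ flows) is also $O(n^2)$; the $1/\poly(n)$-additive-error allowance in the definition of $T_{OT}$ is not even needed here since the reduction is exact on integral instances, but I would keep it since the ambient framework (fractional rounding) uses it. I expect the main obstacle to be purely presentational rather than mathematical: making the capacity and cost bounds $(U,C) = (\mu, \|\mQ\|_\infty)$ come out \emph{exactly} as in the $T_{MCC}$ call, since the naive gadget inflates costs by an $O(n)$ factor (for the large-$M$ forcing) or the return-arc capacity by an $O(n)$ factor (to $\|\vr\|_1 \le n\mu$). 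The resolution is either to observe that $O(n)$-factor blowups in $C,U$ change $\log C,\log U$ only by $O(\log n)$ and are harmless for \Cref{theo:costCapScaling}, or to use a slightly more careful gadget (parallel unit-capacity edges, or node-demand formulations) that keeps the parameters tight; I would state the clean version and note the constant/polynomial-factor flexibility explicitly.
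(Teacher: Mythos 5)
Your construction is a genuinely different route from the paper's, and while it works in spirit, it does not cleanly produce the exact parameters $(2n, n^2, \|\mQ\|_\infty, \mu)$ claimed in the lemma — a deficiency you correctly flag yourself. The paper sidesteps the entire saturation-forcing issue you wrestle with: it first constructs, in $O(n^2)$ time, an integral feasible matrix $\mX^{(0)}$ with $\mX^{(0)}\vone = \vr$ and $\mX^{(0)\top}\vone = \vc$ (e.g.\ by a greedy/northwest-corner fill), and then rewrites \eqref{eq:OT} as minimizing $\sum_{ij}\mQ_{ij}\bDelta_{ij}$ over corrections $\bDelta$ with $\bDelta\vone = 0$, $\bDelta^\top\vone = 0$, and $\mX^{(0)}+\bDelta \ge 0$. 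The zero-row-sum and zero-column-sum constraints make $\bDelta$ a circulation directly on the complete bipartite graph with $n+n=2n$ vertices, the constraint $\mX^{(0)}+\bDelta \ge 0$ together with $\mX_{ij} \le \mu$ gives capacities bounded by $\mu$ on each arc (direction and capacity chosen according to $\mX^{(0)}_{ij}$), and the costs are exactly $\pm\mQ_{ij}$, so $C=\|\mQ\|_\infty$ with no inflation. There is no source, no sink, no return arc, and no large $-M$ costs, because starting from a feasible point means the residual problem is already a pure circulation. Your source/sink formulation, by contrast, has $2n+2$ vertices and forces you into the $-M$ trick (inflating $C$ to $O(n\|\mQ\|_\infty)$) or lower-bound gadgets; you are right that this $O(\log n)$ slack in $\log C$ is harmless downstream for \Cref{theo:costCapScaling}, but it means the lemma as literally stated would need its $T_{MCC}$ call amended. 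The paper's initial-feasible-point formulation is the cleaner move here and is worth internalizing as the standard way to turn a transportation LP into a genuine circulation problem without auxiliary vertices.
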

\begin{proof}
First, we can construct in $O(n^2)$ time a integral matrix $\mX^{(0)}$ such that $\mX^{(0)} \mathbf{1} = \br$ and $\mX^{(0) \top} \mathbf{1} = \bc$.
Solving \eqref{eq:OT} is equivalent to finding $\bDelta$ that minimizes
\begin{align*}
    \min_{\bDelta} \sum_{i, j} \mQ_{ij} \bDelta_{ij}, \text{ such that } \mX^{(0)} + \bDelta \ge 0, \bDelta \mathbf{1} = 0, \text{and} \bDelta^\top \mathbf{1} = 0
\end{align*}
This corresponds to an MCC problem on a complete bipartite graph with $n$ vertices on each side.
The direction and capacity of each edge between the $i$-th vertex on the left and the $j$-th vertex on the right depend on the value of $\mX^{(0)}_{ij}.$
\end{proof}

Next, we show that one can reduce solving \eqref{eq:MCC} to few instances where the largest cost in absolute value is $O(n).$
This is done via a revisit of the cost scaling scheme that appears in \cite{CKLPPGS22}.
\begin{lemma}[Cost Scaling, Lemma C.3~\cite{CKLPPGS22}]
\label{lemma:costScaling}
We have
\begin{align*}
T_{MCC}(n, m, C, U) = O((T_{MCC}(n, m, 10n, U) + m)\log C)
\end{align*}
\end{lemma}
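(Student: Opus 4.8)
The plan is to reduce the cost range of an MCC instance by a constant factor using one call to an MCC solver on a rescaled instance, and then iterate this $O(\log C)$ times. Concretely, given an instance with costs $\bc \in \pm[C]^E$, I would first compute a near-optimal solution on the instance where each cost is divided by (roughly) half and rounded, i.e. set $\bc' := \lceil \bc / 2 \rceil$ so that $\bc' \in \pm[\lceil C/2\rceil]^E$. The key structural fact is that an \emph{exactly optimal} circulation $\bf'$ for the cost vector $\bc'$ is, up to a bounded error, also near-optimal for $\bc$: since $\|2\bc' - \bc\|_\infty \le 2$, for any feasible circulation $\bg$ we have $|\bc^\top \bg - 2 (\bc')^\top \bg| \le 2 \|\bg\|_1 \le 2 m U$. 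This does not immediately give a small enough error, so the actual recursion must track and correct a residual cost vector rather than naively halving.

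The cleaner way to organize this, and the one I expect matches \cite{CKLPPGS22}, is a \emph{bit-by-bit} scaling in the style of classical min-cost flow cost-scaling: write $\bc$ in binary, process it from the most significant bit downward, and maintain a current partial-cost vector $\bc^{(k)}$ and a current optimal (or $\eps$-optimal, in the complementary-slackness sense) circulation $\bf^{(k)}$ for $\bc^{(k)}$. At each step we double $\bc^{(k)}$ (shifting in the next bit), which at most doubles the magnitude of the reduced costs / the optimality violation, and then we run one MCC solve on a residual graph where costs have magnitude $O(n)$ — this is the step that reuses $T_{MCC}(n, m, 10n, U)$ — to restore optimality. After $O(\log C)$ such refinement steps we have recovered an exact optimum for the full cost vector $\bc$. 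Each step costs $T_{MCC}(n,m,10n,U) + O(m)$ (the $O(m)$ for building the residual instance and updating $\bf$), giving the claimed $O((T_{MCC}(n,m,10n,U)+m)\log C)$ bound.

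The main steps, in order, are: (1) state the invariant — after step $k$ we hold a circulation that is optimal for the cost vector formed by the top $k$ bits of $\bc$; (2) base case: the top bit gives costs in $\{-1,0,1\}$, solved directly within the $10n$ cost budget; (3) inductive step: given optimality for the top $k$ bits, form $2\bc^{(k)} + (\text{bit } k{+}1)$, observe that $\bf^{(k)}$ still satisfies complementary slackness up to reduced costs bounded by $O(1)$ after doubling the potentials, set up the residual MCC instance whose costs are exactly these bounded reduced costs (hence in $\pm[10n]$ for an appropriate constant), solve it, and combine; (4) bound the number of steps by $\lceil \log_2 C \rceil + O(1)$ and multiply. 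The honest bookkeeping — that reduced costs stay $O(n)$, that the residual instance really is an MCC instance with the same $n,m$ and capacity bound $U$, and that combining preserves feasibility and optimality — is where essentially all the work lies.

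The hardest part will be the inductive step, specifically verifying that when we double the cost vector the optimality certificate degrades only by an additive $O(1)$ and that this residual error can be captured as an MCC instance with costs of magnitude $O(n)$ (and \emph{not}, say, $O(m)$), since the whole point is that the recursive calls have cost range independent of $C$. This requires the standard min-cost-flow fact that an integral optimal circulation admits integral vertex potentials with integral reduced costs, and that on the residual graph of a $\theta$-optimal solution the relevant reduced-cost magnitudes are $O(n\theta)$ with $\theta = O(1)$ here; I would cite \cite{CKLPPGS22} Lemma C.3 for the precise constants rather than re-deriving them. Everything else — the geometric count of $O(\log C)$ phases, the $O(m)$ overhead per phase, and feasibility bookkeeping — is routine.
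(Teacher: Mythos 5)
Your proposal is essentially the paper's approach: the paper's own proof is just a three-line note that CKLPPGS22's cost scaling (their Lemma C.3/C.8) already works here, with the observation that rounding to accuracy $\eps/2$ on costs capped to the range $[-\eps, \eps n]$ yields rescaled integer costs of magnitude at most $10n$ --- exactly the $O(n)$ reduced-cost bound you flag as the crux of the inductive step. Since you likewise defer the constant-tracking to CKLPPGS22 and your bit-by-bit framing is a standard reformulation of the same $\eps$-halving scheme, there is no real divergence in substance.
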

\begin{proof}
In Lemma C.8 of \cite{CKLPPGS22}, we only need the rounded cost differs from the real cost by at most $\eps / 2.$
Therefore, we only need to round edge costs to the nearest integral multiple of $\eps / 2$ within the range $[-\eps, \eps n].$
Thus, the new rounded costs are within $\pm (\eps / 2) \cdot [10n].$
\end{proof}

Given the largest cost in absolute value is $O(n)$, we can further reduce \eqref{eq:MCC} to few instances whose capacity is $\poly(n).$
This is also done via a revisit of the capacity scaling scheme of \cite{CKLPPGS22}.
\begin{lemma}[Capacity Scaling, Lemma C.10~\cite{CKLPPGS22}]
\label{lemma:capScaling}
We have
\begin{align*}
T_{MCC}(n, m, 10n, U) = O((T_{MCC}(n, m, O(n), O(m^2n^4)) + m)\log U)
\end{align*}
\end{lemma}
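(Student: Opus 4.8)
The plan is to prove this by the classical dyadic capacity-scaling scheme, shrinking the capacities one bit at a time. Write $L := \lceil \log_2(U+1) \rceil = O(\log U)$ and, for $k = 0,1,\dots,L$, set $\bu^{(k)} := \lfloor \bu/2^{L-k} \rfloor$ coordinatewise, so that $\bu^{(0)} = \vzero$, $\bu^{(L)} = \bu$, and for every edge $e$ one has $\bu^{(k+1)}_e \in \{\,2\bu^{(k)}_e,\ 2\bu^{(k)}_e+1\,\}$; equivalently $\bu^{(k+1)} = 2\bu^{(k)} + \bb^{(k)}$ for some $\bb^{(k)} \in \{0,1\}^E$. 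We maintain an optimal \emph{integral} circulation $\bf^{(k)}$ for capacities $\bu^{(k)}$ and the unchanged cost vector $\bc$; the base case $\bf^{(0)}=\vzero$ is immediate, and $\bf^{(L)}$ is what we want. Since $\bc$ is never modified, every auxiliary instance has costs bounded by $10n$ in absolute value, i.e.\ $O(n)$, so only the capacity bound needs attention.

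The crux is the inductive step: given $\bf^{(k)}$, obtain $\bf^{(k+1)}$ with a single min-cost circulation call on a small-capacity instance. Put $\bg := 2\bf^{(k)}$. Since the min-cost circulation problem has an integral optimum whose value is linear along the ray $\lambda \mapsto \lambda\bu^{(k)}$, the integral circulation $\bg$ is optimal for capacities $2\bu^{(k)}$. The key \emph{stability} claim is that there is an optimal integral circulation $\bf^{(k+1)}$ for $\bu^{(k+1)}$ with $\|\bf^{(k+1)} - \bg\|_\infty \le m$. To see this, choose $\bf^{(k+1)}$ optimal for $\bu^{(k+1)}$ minimizing $\|\bf^{(k+1)}-\bg\|_1$ and take a \emph{conformal} cycle decomposition $\bf^{(k+1)} - \bg = \sum_i \gamma_i \vone_{C_i}$ with integer $\gamma_i \ge 1$ and each simple cycle $C_i$ oriented along the sign of the difference on every edge it uses. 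A standard exchange argument shows each $C_i$ has strictly negative $\bc$-cost: a cycle of nonnegative cost could be peeled off, contradicting optimality or the minimality of $\|\bf^{(k+1)}-\bg\|_1$, and feasibility of the peeled circulation follows purely from integrality and the conformal signs. But a negative-cost cycle unblocked in the residual graph of $\bg$ with respect to $2\bu^{(k)}$ would contradict optimality of $\bg$ there; hence each $C_i$ traverses, forward, some edge $e$ with $\bg_e = 2\bu^{(k)}_e$ and $\bb^{(k)}_e = 1$. On such an edge $\bf^{(k+1)}_e \le 2\bu^{(k)}_e + 1 = \bg_e + 1$, so the conformal difference is $\le 1$ there, forcing $\gamma_i = 1$ and making $e$ unique to $C_i$. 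Charging each cycle to such an edge bounds the number of cycles by $\|\bb^{(k)}\|_1 \le m$, so $\|\bf^{(k+1)}-\bg\|_\infty \le m$.

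Granting this, $\bf^{(k+1)}$ is recovered by one min-cost circulation computation on the \emph{clipped residual graph} $\Ghat$ of $\bg$ with respect to capacities $\bu^{(k+1)}$: for each $e = (v,w)$ of $G$ create a forward arc $v \to w$ of cost $+\bc_e$ and capacity $\min(\bu^{(k+1)}_e - \bg_e,\ m)$ and a backward arc $w \to v$ of cost $-\bc_e$ and capacity $\min(\bg_e,\ m)$. This is a genuine min-cost circulation instance on $n$ vertices, $O(m)$ arcs, with costs $O(n)$ and capacities $\le m$. Any circulation in $\Ghat$ maps (forward minus backward flow per edge) to a circulation $\bDelta$ with $\bg+\bDelta$ feasible for $\bu^{(k+1)}$ and total cost $\bc^\top\bg + (\text{cost in }\Ghat)$; conversely, the stable $\bf^{(k+1)}$ above has $\|\bf^{(k+1)}-\bg\|_\infty \le m$, so its difference maps to a circulation feasible in $\Ghat$ (no clip binds). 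Hence a minimum-cost circulation in $\Ghat$ lifts to an optimal integral $\bf^{(k+1)}$ for $\bu^{(k+1)}$. Running this for $k=0,\dots,L-1$ uses $O(\log U)$ oracle calls on instances with $n$ vertices, $O(m)$ edges, costs $O(n)$ and capacities $O(m)$, plus $O(m)$ bookkeeping per step, which yields the claimed identity since $O(m) \le O(m^2 n^4)$. The one genuinely delicate point is the stability claim $\|\bf^{(k+1)} - 2\bf^{(k)}\|_\infty \le m$ — setting up the conformal decomposition and verifying that every peeled or charged cycle keeps the circulation feasible and optimal; everything else is routine residual-graph bookkeeping, and this recovers Lemma~C.10 of \cite{CKLPPGS22} (with a more generous, hence simpler, capacity bound than stated there, which is harmless for \Cref{theo:costCapScaling}).
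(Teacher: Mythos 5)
Your proof is correct, but it takes a genuinely different route from the paper's. The paper's proof is a two-line citation: it invokes the capacity-scaling scheme of Lemma~C.10/C.11 of \cite{CKLPPGS22}, which rounds capacities to a granularity governed by the \emph{approximation ratio} of a cycle found by solving a unit-capacitated MCC, and simply tracks constants (with costs bounded by $10n$ rather than $m^{10}$, the ratio becomes $10mn^2$ and the rounded capacities become $O((mn^2)^2) = O(m^2n^4)$). You instead give a fully self-contained argument via classical dyadic (bit-by-bit) capacity scaling in the style of Edmonds--Karp/Gabow: you maintain an optimal integral circulation $\bf^{(k)}$ at each scale and prove the stability lemma $\|\bf^{(k+1)} - 2\bf^{(k)}\|_\infty \le m$ by a conformal cycle decomposition, charging each negative-cost cycle to a distinct forward edge that is saturated under $2\bu^{(k)}$ but has a new unit of capacity under $\bu^{(k+1)}$. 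I checked the charging argument and it is sound: each conformal cycle in the decomposition minimizing $\|\bf^{(k+1)} - 2\bf^{(k)}\|_1$ must have strictly negative cost (else it could be peeled), hence must be blocked in the residual graph of $2\bf^{(k)}$ w.r.t.\ $2\bu^{(k)}$, which forces it through such an edge where the difference is at most $1$, pinning $\gamma_i = 1$ and making the edge unique to that cycle. Your route actually yields the stronger per-edge capacity bound $O(m)$ (rather than the paper's $O(m^2n^4)$) on the subproblems, at the cost of $2m$ arcs from the clipped residual graph — both harmless for the $O(\cdot)$ in the lemma statement and for \Cref{theo:costCapScaling}. In short: the paper buys brevity by black-boxing the prior scheme; your proof is longer but elementary and self-contained, and gives a slightly better constant in the capacity.
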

\begin{proof}
In Lemma C.11 of \cite{CKLPPGS22}, the cycle found via solving unit-capacitated MCC has an approximation ratio $10 m n^2$ instead of $m^{12}$ because the cost is bounded by $10n$ instead of $m^{10}.$
Thus, the rounded capacities are integers at most $O((mn^2)^2).$
\end{proof}

Finally, we show that MCC can be solved using the \textsc{Sinkhorn} algorithm with regularization scheduling.
In particular, we reduce any integral MCC to an integral OT instance.
Using the algorithm from \cref{theo:algo}, we can compute a feasible solution within $OPT + 1 / \poly(n)$.
Then, we can round the solution to a feasible integral solution without increasing the cost in $n^2$-time via a cycle cancellation procedure from \cite{KP15}.
The reduction is summarized as follows:
\begin{lemma}[Solving MCC via OT]
\label{lemma:MCCtoOT}
We have
\begin{align*}
T_{MCC}(n, m, C, U) = T_{OT}(\max\{n, m\}, mUC, mU).
\end{align*}
In addition, the total demand/supply of the reduced OT instance is $mU$ as well.
\end{lemma}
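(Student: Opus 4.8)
The plan is to encode a minimum cost circulation instance $G = (V, E)$ with $|V| = n$, $|E| = m$, costs in $\pm[C]$, and capacities in $[U]$ as a transportation instance whose supplies and demands live at the vertices, and whose transportation cost mimics the circulation cost up to an additive constant. First I would apply the standard trick of replacing each edge $e = (u,v)$ with capacity $\bu_e$ by splitting flow: instead of tracking $\bf_e \in [0, \bu_e]$, I would think of sending $\bu_e$ units out of $u$ toward an auxiliary sink, with the option of ``redirecting'' some of those units (those not used by $\bf_e$) back so that the net effect at $u$ and $v$ matches $\bf_e$. Concretely, create one OT ``demand'' node for each vertex $v \in V$ and one OT ``supply'' node for each edge $e \in E$ (or vice versa); edge-node $e = (u,v)$ has supply $\bu_e$, and it is connected to exactly the two vertex-nodes $u$ and $v$, with transportation cost $0$ on one of these arcs and cost $\bc_e$ on the other (shifting by $\min(0, \bc_e)$ if one wants nonnegative costs, absorbing the shift into the additive constant). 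The amount shipped from $e$ to $v$ then plays the role of $\bf_e$, and the flow-conservation constraint $\mB^\top \bf = 0$ at each vertex translates into the row/column-sum constraint defining $\mU(\vr, \vc)$ once the vertex-node demands are set to the appropriate values (which are determined by the $\bu_e$'s and sum to $\sum_e \bu_e \le mU$). Padding to make both sides have size $\max\{n,m\}$ with dummy zero-supply/zero-demand nodes gives the stated dimension.

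The key quantitative claims to verify in order are: (i) the reduced OT instance has dimension $\max\{n,m\}$ — immediate from the construction plus padding; (ii) the maximum cost magnitude is $O(mUC)$ — here each individual transportation cost is only $|\bc_e| \le C$, but the \emph{objective value} and the additive shift incurred by nonnegativizing costs can be as large as $\sum_e \bu_e |\bc_e| \le mUC$, which is what controls the ``$C$'' parameter $mUC$ passed to $T_{OT}$ since $T_{OT}$'s accuracy guarantee is additive and we need $1/\poly$ accuracy relative to a problem whose optimum has this scale; (iii) the total demand/supply is $\sum_e \bu_e \le mU$ — immediate; (iv) a feasible OT solution of cost within $OPT + 1/\poly(n)$ pulls back to a fractional circulation of cost within $OPT_{MCC} + 1/\poly(n)$, which by the cycle-cancellation / rounding procedure of \cite{KP15} can be converted in $\poly(n)$ time into an integral optimal circulation. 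For step (iv) I would note that the pullback $\bf_e := (\text{amount shipped from edge-node } e \text{ to vertex-node } v)$ automatically satisfies $0 \le \bf_e \le \bu_e$ and $\mB^\top \bf = 0$ by the OT constraints, and its cost equals the OT cost minus the fixed additive shift, so approximate OT optimality is exactly approximate MCC optimality.

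The main obstacle I expect is getting the correspondence between flow conservation and the bipartite row/column sums exactly right, including the bookkeeping of the additive constant when costs are shifted to be nonnegative (OT as written in \eqref{eq:OT} allows $\mQ_{ij}$ of either sign, so this may not even be necessary, but if one wants $\mQ \ge 0$ one must track the shift carefully and confirm it is at most $mUC$ in magnitude). A secondary subtlety is ensuring that the two arcs out of each edge-node indeed suffice — i.e., that no edge-node needs to route to more than its two endpoints — and that the vertex-node demands one must set are themselves nonnegative integers bounded by $mU$; this requires choosing the orientation convention (which arc gets cost $0$ versus cost $\bc_e$) consistently, e.g. always sending the ``default'' $\bu_e$ units to the head and allowing redirection to the tail. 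Once the encoding is pinned down, verifying the cost identity and the parameter bounds is routine, and invoking \Cref{theo:algo} together with \cite{KP15} closes the argument.
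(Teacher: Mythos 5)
Your overall strategy coincides with the paper's: encode MCC on $G=(V,E)$ as a bipartite transportation instance with one OT node per edge (supply $\bu_e$) and one per vertex (demand given by a weighted degree), let the amount shipped between an edge-node and one of its endpoints play the role of $\bf_e$, read off flow conservation from the marginal constraints, and finish with integrality plus the cycle-cancellation rounding of \cite{KP15}. Your items (i), (iii), and (iv) all line up with what the paper actually does.

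There is, however, a real gap in how you pin down the cost matrix. You write that each edge-node ``is connected to exactly the two vertex-nodes $u$ and $v$,'' but OT as defined in \eqref{eq:OT} is a \emph{dense} linear program over all of $\R_+^{n\times m}$: there is no sparsity constraint, and $\mX$ is free to place mass on any vertex/edge pair whatsoever. To make the encoding a valid reduction you must explicitly forbid routing through non-incident pairs, and the paper does this by setting $\mQ_{ue} = m\cdot U\cdot C$ whenever $u$ is not an endpoint of $e$. That big-$M$ entry is the actual reason $\|\mQ\|_\infty = mUC$, hence the second argument passed to $T_{OT}$. Your stated justification for $mUC$ --- an additive shift to nonnegativize the costs --- is not correct: shifting each $\mQ_{ue}$ by at most $C$ keeps every entry $O(C)$, not $mUC$, and no shift is needed at all since \eqref{eq:OT} does not require $\mQ\ge 0$. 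Without the big-$M$ entries an OT-optimal $\mX$ could ship supply from an edge-node to a far-away vertex at cost $0$, and the correspondence with circulations would break. Once you add the big-$M$, the rest of your argument (conservation via marginals, the orientation convention you flag, and the bound $\sum_e \bu_e \le mU$) matches the paper.
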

\begin{proof}
Given an instance of \eqref{eq:MCC}, we construct an integral OT instance as follows:
We define the row and column space indexed by $V$ and $E$ respectively.
For any $u \in V$, we define its demand $\br_u$ to be the weighted incoming degree $\br_u = \deg^{in}(u) = \sum_{e = (u, v)} \bu(e).$
For any edge $e \in E$, we define its supply $\bc_e$ to be its capacity $\bc_e = \bu(e).$
Clearly, both the demand and supply vectors $\br$ and $\bc$ are integers at most $m \cdot U.$
The cost matrix $\mQ \in \R^{V \times E}$ is defined as follows:
\begin{align*}
\mQ_{ue} = \begin{cases}
    \bc(e) &\text{ if } e = (u, v) \\
    0 &\text{ if } e = (v, u) \\
    m \cdot U \cdot C &\text{otherwise}
\end{cases}
\end{align*}

Next, we show that solving the OT w.r.t. $\br, \bc$, and $\mQ$ we construct is equivalent to solving the given MCC instance.
Given any integral OT solution $\mX$, we define the flow $\bf$ as follows:
\begin{align*}
    \bf_e = \mX_{ue} \ge 0, \forall e = (u, v)
\end{align*}
We have $\bc^\top \bf = \sum_{u, e} \mQ_{ue} \mX_{ue}.$
To see that $\bf$ is a circulation, let us look at the net flow at any vertex $u$
\begin{align*}
\bf^{net}(u)
&= \sum_{e = (v, u)} \bf_e - \sum_{e = (u, v)} \bf_e \\
&= \sum_{e = (v, u)} \mX_{ve} - \sum_{e = (u, v)} \mX_{ue} \\
&= \sum_{e = (v, u)} \left(\bu(e) - \mX_{ue}\right) - \sum_{e = (u, v)} \mX_{ue} \\
&= \deg^{in}(u) - \sum_{e: u \in e} \mX_{ue} = \deg^{in}(u) - \br_u = 0
\end{align*}
where the $3_{rd}$ equality comes from that the supply on edge $e$ in the OT instance is exactly $\bu(e)$, i.e. $\mX_{ue} + \mX_{ve} = \bc_e = \bu(e).$
In addition, $\mX_{ue} = 0$ whenever $\mQ_{ue} = mUC$ because $\mX$ is an optimal solution.

On the other hand, given any feasible circulation $\bf$ to the MCC instance, we can construct $\mX$, a feasible OT solution of identical cost as follows:
\begin{align*}
\mX_{ue} = \begin{cases}
    \bf_e &\text{ if } e = (u, v) \\
    \bu(e) - \bf_e &\text{ if } e = (v, u) \\
    0 &\text{otherwise}
\end{cases}
\end{align*}
Using a similar argument as above, we know that $\bc^\top \bf = \sum_{u, e} \mQ_{ue} \mX_{ue}$, $\mX \mathbf{1} = \br$, and $\mX^\top \mathbf{1} = \bc.$

Thus, to solve the MCC, we can apply \Cref{theo:algo} to solve the OT instance with $1/\poly(n)$-additive error in
\begin{align*}
\O\left(\underset{\text{\# of iterations}}{\left(\sum_u \bd^H(u)\right)^2} \cdot \underset{\text{cost per iteration}}{mn}\right) = \O((Um)^2 mn)\text{-time.}
\end{align*}
Then, we round the fractional solution to an integral one without additional error in $O(m^2)$-time (see Section 5 of \cite{KP15}).
Integrity ensures that any integral solution within $OPT+1/\poly(n)$ is an exact optimal solution.
\end{proof}

Given all these Lemmas, we are now ready to prove \Cref{theo:costCapScaling}.
\begin{proof}[Proof of \Cref{theo:costCapScaling}]
Given an integral OT instance, combining \Cref{lemma:OTtoMCC}, \Cref{lemma:costScaling}, \Cref{lemma:capScaling}, and \Cref{lemma:MCCtoOT} solves the instance in time
\begin{align*}
T_{OT}(n, \norm{\mQ}_{\infty}, \mu)
&\underset{\text{\Cref{lemma:OTtoMCC}}}{=} O(n^2) + T_{MCC}(2n, n^2, \norm{\mQ}_{\infty}, \mu) \\
&\underset{\text{\Cref{lemma:costScaling}}}{=} O\left(n^2 + T_{MCC}\left(2n, n^2, O(n), \mu\right)\log\left(\norm{\mQ}_{\infty}\right)\right) \\
&\underset{\text{\Cref{lemma:capScaling}}}{=} O\left(n^2 + T_{MCC}\left(2n, n^2, O(n), O(n^8)\right)\log\left(\norm{\mQ}_{\infty}\right)\log(\mu)\right) \\
&\underset{\text{\Cref{lemma:MCCtoOT}}}{=} O\left(n^2 + n^4 + T_{OT}\left(n^2, O(n^{11}), O(n^{10})\right)\log\left(\norm{\mQ}_{\infty}\right)\log(\mu)\right).
\end{align*}
This concludes the proof.
\end{proof}

\printbibliography

\end{document}